\theoremstyle{plain}
\newtheorem{theorem}{Theorem}[section]
\newtheorem{lemma}[theorem]{Lemma}
\theoremstyle{definition}
\newtheorem{claim}[theorem]{Claim}
\newcommand {\br} [1] {\ensuremath{ \left( #1 \right) }}
\newcommand {\bra} [1] {\ensuremath{ \left\langle #1 \right| }}
\newcommand {\ket} [1] {\ensuremath{ \left| #1 \right\rangle }}
\newcommand {\ketbratwo} [2] {\ensuremath{ \left| #1 \middle\rangle \middle\langle #2 \right| }}
\newcommand {\ketbra} [1] {\ketbratwo{#1}{#1}}
\newcommand {\id} {\ensuremath{\mathds{1}}}
\newcommand{\braket}[2]{\langle#1|#2\rangle}
\newcommand {\suppress}[1]{}
\newcommand {\defeq} {\ensuremath{ \stackrel{\mathrm{def}}{=} }}
\def\max{\mathrm{max}}
\def\min{\mathrm{min}}
\newcommand{\bigo}[1]{\mathcal{O}\left(#1\right)}
\newcommand{\omeg}[1]{\Omega\left(#1\right)}
\newcommand{\vi}{\bf{i}}
\newcommand{\step}{\mathrm{Step}}
\newcommand{\intm}{\lceil m \rceil}
\newcommand{\nbyt}{\mathsf{quo}}
\def\cL{\mathcal{L}}
\begin{document}

\title{\textbf{Improved local spectral gap thresholds for lattices of finite dimension}}

\author{Anurag Anshu}

\maketitle

\begin{abstract}
Knabe's theorem lower bounds the spectral gap of a one dimensional frustration-free local hamiltonian in terms of the local spectral gaps of finite regions. It also provides a local spectral gap threshold for hamiltonians that are gapless in the thermodynamic limit, showing that the local spectral gap much scale inverse linearly with the length of the region for such systems. Recent works have further improved upon this threshold, tightening it in the one dimensional case and extending it to higher dimensions. Here, we show a local spectral gap threshold for frustration-free hamiltonians on a finite dimensional lattice, that is optimal up to a constant factor that depends on the dimension of the lattice. Our proof is based on the detectability lemma framework and uses the notion of coarse-grained hamiltonian (introduced in [Phys. Rev. B 93, 205142]) as a link connecting the (global) spectral gap and the local spectral gap.
\end{abstract}

\section{Introduction}

A central problem in condensed matter physics is to understand the properties of the ground states of spin systems. While finding a complete description of the ground states can be a daunting task, many important ground state properties (notably the area laws \cite{Hastings07, AradLV12, AradKLV13} and the decay of correlation \cite{Hastings04, HastingsK06, NachtergaeleS06}) are intertwined with the spectral gap of the associated hamiltonian. Thus, understanding the spectral gap of a local hamiltonian takes a central stage in the mathematical physics of spin systems.

Recent results on the undecidability of the spectral gap \cite{CubittPW15, BauschCLP19} show that there is no general scheme for computing the spectral gap of an arbitrary local hamiltonian. But, for a large and important family known as the frustration-free hamiltonians (to be defined shortly), there are two powerful methods that provide criteria for system size independent lower bounds on the spectral gap. First is the martingale method due to Nachtergaele \cite{Nachtergaele96}, that guarantees a large spectral gap whenever certain product of the local ground space projectors is close to the global ground space projector. The second method, introduced by Knabe \cite{Knabe88}, bounds the spectral gap whenever the `local' spectral gap in a finite region is large enough.  These tools have found several applications in recent years, such as in the classification of gapped phases for qubits \cite{BravyiG15}, gap of generic translationally invariant hamiltonians \cite{Lemm19a}, properties of random quantum circuits \cite{BrandaoHH16} etc.

A local hamiltonian $H=\sum_\alpha P_\alpha$, where each $P_\alpha \succeq 0$ is a local interaction that acts on a small number of spins, is said to be frustration-free if its ground space $G$ satisfies $P_\alpha G=0$ for all $\alpha$. Without loss of generality, one can assume that each $P_{\alpha}$ is a projector (that is, $P^2_{\alpha}=P_\alpha$), with a constant multiplicative change in the spectral gap. Knabe's method, which is the central focus of present work, applies to translationally invariant frustration-free hamiltonians on a periodic chain of spins. It states the following.

\vspace{0.1in}

\noindent {\bf Knabe's theorem:} Let $H= \sum_{i=1}^n P_{i,i+1}$ be a translationally invariant nearest-neighbour hamiltonian on a periodic chain of $n$ spins, with spectral gap $\gamma$. Let $h_{k,t} = \sum_{i=k+1}^{k+t-1} P_{i,i+1}$ be the hamiltonian restricted to the spins $\{k+1,k+2,\ldots k+t\}$. Let $\gamma(t)$ be the spectral gap of $h_{k,t}$, which does not depend on $k$ due to translation invariance. Then $\gamma + \frac{1}{t-2} \geq \frac{t-1}{t-2}\gamma(t)$.

\vspace{0.1in}

We provide a sketch of the proof, to help compare with our techniques. The spectral gap $\gamma$ of $H$ is the largest number that satisfies $H^2 \succeq \gamma H$. In order to lower bound $\gamma$, we expand $H^2= \sum_{i, i'}P_{i,i+1} P_{i',i'+1}$ and use the fact that $P_{i,i+1}$ are projectors to simplify $H^2= H + \sum_{i\neq i'}P_{i,i+1} P_{i',i'+1}$. If all the terms $P_{i,i+1} P_{i',i'+1}$ were positive semi-definite, we would obtain $H^2 \succeq H$, leading to $\gamma \geq 1$. But this is not the case, as overlapping local terms $P_{i,i+1}$ and $P_{i+1,i+2}$ need not commute. To handle such terms, Knabe \cite{Knabe88} invokes the hamiltonians $h_{k,t}$ and makes use of the operator inequality $h_{k,t}^2 \succeq \gamma(t) h_{k,t}$. This helps in lower bounding sums of the form $\sum_{i,i'}P_{i,i+1} P_{i',i'+1}$ in terms of $\gamma(t)$.
 
An important consequence of Knabe's theorem is that if $H$ is gapless in the thermodynamic limit (that is, $\gamma\rightarrow 0$ as $n\rightarrow \infty$), then the local gap $\gamma(t)$ must be less than $\frac{1}{t-1}$. This is often termed as the `local gap threshold'. The additive term that captures the local gap threshold has been improved in the recent work \cite{GossetM16}, which shows the inequality $\gamma + \frac{5}{t^2-4} \geq \frac{5}{6}\gamma(t)$. This inequality is tight up to constants, as witnessed by the Heisenberg ferromagnet (see \cite[Section 2]{GossetM16} for details). The authors also consider the problem on a two dimensional periodic square lattice $\cL$, with a nearest-neighbour translationally invariant hamiltonian $H=\sum_{e}P_e$. Here the index $e$ runs over the edges of the lattice. In the same spirit as above, they obtain the inequality $\gamma + \frac{6}{t^2} \geq \gamma(t)$, where $\gamma$ is the spectral gap of $H$ and $\gamma(t)$ is the spectral gap of the hamiltonian $h_S$ restricted over a square region $S$ of side length $t$. 

Subsequent works have made further progress in this direction. The results of \cite{GossetM16} have been extended to a two dimensional lattice with open boundary conditions in \cite{LemmM18}, with the additive term scaling as $t^{-3/2}$. The work \cite{KastoryanoL18} shows that for a gapless hamiltonian on a lattice $\cL$ of finite dimension, the local gap threshold scales as $\bigo{\frac{\log^2(t)}{t}}$. Remarkably, it builds upon the martingale method \cite{Nachtergaele96} and the detectability lemma \cite{AharonovALV08}, rather than the techniques in \cite{Knabe88, GossetM16, LemmM18} sketched earlier. More recently, \cite{Lemm19} improves this to an upper bound of $\frac{3}{t}$ (on a finite dimensional lattice) for the hyper-cubic regions of side length $t$. 

\section{Our results}

Consider a $D$ dimensional regular lattice $\cL$ with unit cells as hypercubes and spins situated on the vertices. Let $H$ be a local hamiltonian defined as $H=\sum_e P_e$, where $e$ runs over the unit cells of $\cL$ and $P_e$ is supported only on the $2^D$ vertices of the corresponding unit cell. This particular set-up is chosen for convenience and our results can be generalized to other lattices as long as the interactions $P_e$ are local. As before, let $\gamma$ be the spectral gap of $H$. For a tuple of integers $(t_1, \ldots t_D)$, we let $\gamma(t_1, t_2, \ldots t_D)$ denote the minimum spectral gap over all hamiltonians $h_S$ restricted to hyper-rectangles $S$ of size $t_1\times t_2\times \ldots t_D$ (where $t_i$ is the side length along the $i$-th axis). We show that
\begin{equation}
\label{eq:infmaintheo}
\gamma(t_1, t_2, \ldots t_D) = \bigo{\gamma + \frac{1}{\min_{q} t^2_q}},
\end{equation}
 where the notation $\bigo{.}$ hides the factors that depend on $D$ (see the formal statement in Theorem \ref{theo:rectknabe}). Note that we do not require $H$ to be translationally invariant. The statement applies to both the open and periodic boundary conditions. For hyper-cubic regions with $t_1=t_2=\ldots t_D=t$, the additive term scales as $\bigo{\frac{1}{t^2}}$, improving upon prior works for $t$ larger than a constant that depends on $D$. 

As discussed towards the end of Section \ref{sec:Ddimgap}, the additive term of $\frac{1}{\min_q t^2_q}$ cannot be improved even in the translationally-invariant case (as witnessed by many parallel copies of a chain of Heisenberg ferromagnet), except potentially for the constant that depends on $D$. Further, Equation \ref{eq:infmaintheo} would be false if $\gamma(t_1, t_2, \ldots t_D)$ were defined as an average (instead of a minimum) over hyper-rectangles of size $t_1\times t_2\times \ldots t_D$.

\subsection{Proof outline}

It suffices to consider the one dimensional case to discuss the proof technique. We will explain later that the higher dimensional case is a simple recursive application of this one dimensional argument. Consider the one dimensional nearest-neighbour hamiltonian $H=\sum_i P_{i,i+1}$ on an open chain of spins, with spectral gap $\gamma$ and ground space $G$. Let $\gamma(t)$ be the minimum spectral gap over all hamiltonians $\sum_{i=k+1}^{k+t-1} P_{i,i+1}$, where $k\in \{0,1,\ldots n-t\}$ . Central to our argument is the coarse-grained hamiltonian $\bar{H}(t) = \sum_S Q_S$ from \cite{AAV16}, which has the same ground space $G$. Here, $S$ are some sets of $t$ consecutive spins (see Figure \ref{fig:tseg}) and $Q_S$ project onto the non-zero eigenstates of $\sum_{i,i+1\in S}P_{i,i+1}$ . Let $\gamma(\bar{H}(t))$ be the spectral gap of $\bar{H}(t)$. The coarse-grained hamiltonian provides a link between $\gamma$ and $\gamma(t)$, as made precise in the following observation \cite{Gossetper19}:
\begin{equation}
\label{eq:gossetineq}
\gamma(\bar{H}(t)) \leq \frac{2\gamma}{\gamma(t)}.
\end{equation}
Its formal proof (in slight generality incorporating the higher dimensional lattices) will be given in Subsection \ref{subsec:prooftheoknabe}. It was shown in \cite{AAV16} that for $t= \omeg{\frac{1}{\sqrt{\gamma}}}$,  $\gamma(\bar{H}(t))=\Omega(1)$. This immediately says that $\gamma(t)=\bigo{\gamma}$ for this choice of $t$. An extension of this result to all $t$ relies on an estimate of the `shrinking ability' of low degree Chebyshev polynomials, which is shown in Claim \ref{clm:chebylowdeg} (see also \cite[Theorem 42]{EldarH17} for a similar estimate). It shows that $\gamma(\bar{H}(t))=\omeg{\frac{t^2\gamma}{1+t^2\gamma}}$, using the converse to the detectability lemma (Lemma \ref{convdetectorig}). Plugging in Equation \ref{eq:gossetineq}, we find that $\gamma(t) = \bigo{\gamma + \frac{1}{t^2}}$. Note that we did not require translation invariance and argument can easily be modified for the periodic chain, by considering a similar coarse-grained hamiltonian.

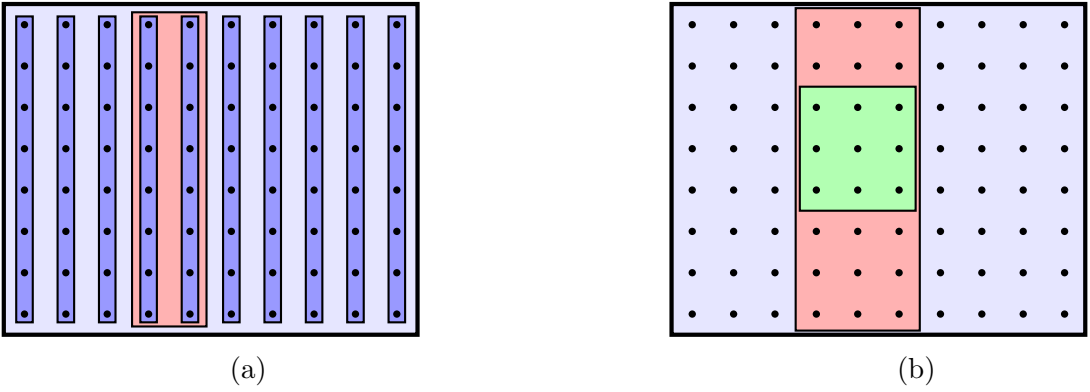
\begin{figure}
\centering
\begin{subfigure}[b]{0.4\textwidth}
\begin{tikzpicture}[xscale=0.55,yscale=0.55]

\draw [fill=blue!10!white, ultra thick] (0.5,0.5) rectangle (10.5, 8.5);
\draw [fill=red!30!white, thick] (3.6,0.7) rectangle (5.4, 8.3);
\foreach \i in {1,...,10}
{
\draw [fill=blue!40!white, thick] (\i-0.2, 0.8) rectangle (\i+0.2, 8.2);
\foreach \j in {1,...,8}
   \draw (\i, \j) node[circle, fill=black, scale=0.25]{};
}

\end{tikzpicture}
\caption{}
  \end{subfigure}
\hspace{2cm}
\begin{subfigure}[b]{0.4\textwidth}
\begin{tikzpicture}[xscale=0.55,yscale=0.55]

\draw [fill=blue!10!white, ultra thick] (0.5,0.5) rectangle (10.5, 8.5);

\draw [fill=red!30!white, thick] (3.5,0.6) rectangle (6.5, 8.4);

\draw [fill=green!30!white, thick] (3.6,3.5) rectangle (6.4, 6.5);

\foreach \i in {1,...,10}
{
\foreach \j in {1,...,8}
   \draw (\i, \j) node[circle, fill=black, scale=0.25]{};
}

\end{tikzpicture}
  \caption{}
\end{subfigure}
\caption{\small (a) We can view the hamiltonian on two dimensional lattice as a hamiltonian on one dimensional chain of column of spins (dark blue rectangles). The interaction $H_4$ between columns $4$ and $5$ is shown as the red rectangle, which decomposes as $H_4= \sum_{j=1}^{n_2-1}P_{4,j}$. (b) Our strategy is to lower bound the spectral gap of $H$ with the spectral gap of hamiltonian $h_S$ supported on the red region $S\times \{1,2,\ldots n_2\}$. The spectral gap of $h_S$ can in turn be lower bounded by the spectral gap of the hamiltonian $h_{S,S'}$ supported on the green region $S\times S'$. \label{fig:colqudit}}
\end{figure}

To explain the argument for higher dimensional lattices, consider a hamiltonian $H=\sum_{i=1}^{n_1-1}\sum_{j=1}^{n_2-1}P_{i,j}$ on a two dimensional square lattice $\{1,2,\ldots n_1\}\times \{1,2,\ldots n_2\}$, where $P_{i,j}$ is supported on the spins $\{(i,j), (i+1,j), (i,j+1), (i+1,j+1)\}$. Following \cite{AAG19}, we can view this hamiltonian as a one dimensional hamiltonian $H=\sum_{i=1}^{n_1-1} H_i$, where $H_i=\sum_{j=1}^{n_2-1}P_{i,j}$ is the `column' hamiltonian acting on two columns of spins, that is, $\{\forall j: (i,j)\}$ and $\{\forall j: (i+1,j)\}$ (c.f. Figure \ref{fig:colqudit}(a)). Such a one dimensional view is not helpful for the technique used in \cite{Knabe88, GossetM16, LemmM18, Lemm19}, as the column hamiltonians $H_i$ are not projectors (recall the sketch of the proof given in the introduction, which crucially uses the fact that $P_{i,i+1}$ are projectors). But our method can be applied to a sum of column hamiltonians in the same manner as the one dimensional case.  We relate the spectral gap of $H$ to the spectral gap of $h_S\defeq \sum_{i \in S}H_i$, where $S$ is some continuous subset of $\{1,2,\ldots n_1\}$ of size $t_1$, up to the additive factor of $\bigo{\frac{1}{t_1^2}}$. Now, $h_S$ is a local hamiltonian on $t\times n_2$ spins (red region in Figure \ref{fig:colqudit}(b)) and can also be viewed as a sum $\sum_{j=1}^{n_2-1} H'_j$ of `row' hamiltonians $H'_j\defeq \sum_{i\in S} P_{i,j}$ acting on rows of spins $\{\forall i\in S: (i,j)\}$ and $\{\forall i\in S: (i,j+1)\}$. Thus, we can apply the same argument to $h_S$, relating its spectral gap to the spectral gap of some local hamiltonian $h_{S,S'}\defeq \sum_{i\in S, j\in S'} P_{i,j}$ (green region in Figure \ref{fig:colqudit}(b)). Here $S'$ is a set of size $t_2$, implying that $h_{S,S'}$ is supported on a square region of size $t_1\times t_2$. The overall additive factor is 
$\bigo{\frac{1}{t_1^2}+\frac{1}{t_1^2}}=\bigo{\frac{1}{\min_{q\in \{1,2\}} t_q^2}}$. Same recursive argument applies to higher dimensions.

\vspace{0.1in}

\noindent {\bf Comparison to prior work:} As already mentioned, our tools significantly differ from those employed in \cite{Knabe88, GossetM16, LemmM18, Lemm19}. Similar to us, the work \cite{KastoryanoL18} employs the detectability lemma and its converse to obtain the local gap threshold. But it does not use the coarse-grained hamiltonians, and builds upon the martingale method. We remark that it may be possible to improve their local gap threshold from $\bigo{\frac{\log^2(t)}{t}}$ to $\bigo{\frac{\text{poly}(\log(t))}{t^2}}$. This is because the statement in \cite[Theorem 11]{KastoryanoL18} can be improved using the ideas presented in \cite{GossetH15}. Such an improvement would still be slightly weaker than our bound in Equation \ref{eq:infmaintheo}, which does not contain the $\text{poly}(\log(t))$ factor. 

\vspace{0.1in}

\noindent {\bf Organization of the technical part:} The technical details appear in the Appendix. Section \ref{sec:chebylowdeg} derives the shrinking factor of low degree Chebyshev polynomials. Section \ref{sec:setupres} defines a general model of local hamiltonian on a chain of spins, which encompasses the column and row hamiltonians discussed earlier. Our main result is Theorem \ref{theo:knabe} which applies to this model of local hamiltonian (for both periodic and open boundary conditions). We discuss the tools of detectability lemma and the coarse-grained hamiltonian in the same section.  Proof of the main result (Theorem \ref{theo:knabe}) appears in Subsection \ref{subsec:prooftheoknabe}. Section \ref{sec:Ddimgap} recursively uses Theorem \ref{theo:knabe} to obtain the result in Equation \ref{eq:infmaintheo}. Sections \ref{append:convdetect} and \ref{append:lightcone} re-derive some known results for completeness. 

\subsection*{Conclusion}

In this work, we have derived a relation between the (global) spectral gap and the local spectral gap of frustration-free local hamiltonians on a lattice, along the lines of Knabe \cite{Knabe88}. The relation is optimal up to factors that depend on the dimension of the lattice. It may be potentially improved if the hamiltonian has further symmetry. For concreteness, consider a local hamiltonian $H=\sum_e P_e$, where $e$ runs over the edges of the lattice and $P_e$ is the same interaction across every edge (in other words, the hamiltonian is isotropic and translationally invariant). In this case, we conjecture that the additive term in Equation \ref{eq:infmaintheo} can be improved to $\frac{1}{\sum_q t^2_q}$, which is the inverse-squared diameter of the hyper-rectangles.  

Our proof is based on the technique of coarse-grained hamiltonian introduced in \cite{AAV16} and shows how the detectability lemma \cite{AharonovALV08} can be used to capture yet another feature of the frustration-free systems. It would be interesting to apply our method to bound the spectral gaps of specific models of frustration-free hamiltonians on a two dimensional lattice (see \cite{HoussamLLNY18, LemmSY19} for such recent applications of prior techniques). It would also be interesting to find implications of our results to the existence of chiral edge modes in three or more dimensions (c.f \cite{LemmM18}).

\subsection*{Acknowledgement}

I am grateful to David Gosset for discussions related to this work, and for sharing his observation in Equation \ref{eq:gossetineq}. I thank Dorit Aharonov, Itai Arad, Fernando Br{\~a}ndao, Angelo Lucia, Marius Lemm and Jamie Sikora for helpful discussions. This work is supported by the Canadian Institute for Advanced Research, through funding provided to the Institute for Quantum Computing by the Government of Canada and the Province of Ontario. Perimeter Institute is also supported in part by the Government of Canada and the Province of Ontario.

\bibliographystyle{alpha}
\bibliography{references1}

\appendix

\section{Low degree behaviour of Chebyshev polynomials}
\label{sec:chebylowdeg}

\noindent Chebyshev polynomial of degree $m$ is defined as 
$$T_m(x) = \begin{cases}  \cos(m\arccos(x)) , \text{ if } |x| < 1\\ \cosh(m\cosh^{-1}(x)), \text{ if } |x|\geq 1 \end{cases}.$$
It has found applications in area laws \cite{AradLV12, AradKLV13}, sub-volume law \cite{AAG19} and the decay of correlation \cite{GossetH15}. We have the following claim (see also \cite[Theorem 42]{EldarH17}). 
\begin{claim}
\label{clm:chebylowdeg}
Fix $\nu\in \br{0,\frac{1}{4}}$ and a real number $m> 0$. Consider the polynomial
$$\step_{m, \nu}(x) = \frac{T_{\intm}\br{-1+\frac{2x}{1-\nu}}}{T_{\intm}\br{\frac{1+\nu}{1-\nu}}}.$$ It holds that
$\step_{m,\nu}(1)=1$ and 
$$|\step_{m, \nu}(x)|\leq \frac{1}{1+ \frac{m^2\nu}{2(1-\nu)}},$$ for $x\in (0,1-\nu)$.
\end{claim}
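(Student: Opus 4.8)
The plan is to reduce the bound to two elementary facts about the hyperbolic cosine. First I would verify $\step_{m,\nu}(1)=1$ by direct substitution: at $x=1$ the argument of the numerator becomes $-1+\frac{2}{1-\nu}=\frac{1+\nu}{1-\nu}$, which is exactly the argument appearing in the denominator, so the ratio equals $1$.

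For the main inequality, fix $x\in(0,1-\nu)$ and set $u=-1+\frac{2x}{1-\nu}$. As $x$ ranges over $(0,1-\nu)$ the quantity $u$ ranges over $(-1,1)$, and since $\intm=\lceil m\rceil$ is a positive integer we have $|T_{\intm}(u)|=|\cos(\intm\arccos u)|\le 1$. Hence $|\step_{m,\nu}(x)|\le 1/T_{\intm}\br{\frac{1+\nu}{1-\nu}}$ (the denominator exceeds $1$ since $\frac{1+\nu}{1-\nu}>1$ and $T_{\intm}$ is increasing with value $1$ at $1$), and it remains to show $T_{\intm}\br{\frac{1+\nu}{1-\nu}}\ge 1+\frac{m^2\nu}{2(1-\nu)}$.

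To lower bound the denominator I would write $\frac{1+\nu}{1-\nu}=\cosh y$ with $y=\cosh^{-1}\br{\frac{1+\nu}{1-\nu}}>0$, so that $T_{\intm}(\cosh y)=\cosh(\intm\, y)$. Using $\cosh z\ge 1+\tfrac{z^2}{2}$ together with $\intm\ge m$ gives $T_{\intm}\br{\frac{1+\nu}{1-\nu}}\ge 1+\tfrac{m^2 y^2}{2}$, so it suffices to prove $y^2\ge \frac{\nu}{1-\nu}$, i.e. $\cosh\br{\sqrt{\tfrac{\nu}{1-\nu}}}\le \frac{1+\nu}{1-\nu}=1+\frac{2\nu}{1-\nu}$. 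Writing $s=\sqrt{\nu/(1-\nu)}$ so that $\frac{2\nu}{1-\nu}=2s^2$, this is the inequality $\cosh s\le 1+2s^2$; since $\nu<\tfrac14$ forces $s<1/\sqrt3$, it is enough to know $\cosh s\le 1+s^2$ on $[0,1/\sqrt3]$. That estimate I would obtain by a second-derivative argument: the function $f(s)=1+s^2-\cosh s$ satisfies $f(0)=f'(0)=0$ and $f''(s)=2-\cosh s\ge 0$ for $s\le\cosh^{-1}2\approx 1.317$, hence $f\ge 0$ on that interval, which comfortably contains $[0,1/\sqrt3]$.

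Assembling the pieces yields $T_{\intm}\br{\frac{1+\nu}{1-\nu}}\ge 1+\frac{m^2\nu}{2(1-\nu)}$ and hence the claimed bound. The only place requiring care is keeping the constants aligned: using the sharp quadratic lower bound $\cosh z\ge 1+z^2/2$ rather than a lossier one, and exploiting $\intm\ge m$ to replace the ceiling by $m$, so that the final additive term comes out exactly as $\frac{m^2\nu}{2(1-\nu)}$. I do not expect a genuine obstacle here; the argument is careful bookkeeping of these elementary hyperbolic estimates.
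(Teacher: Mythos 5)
Your proof is correct and follows essentially the same route as the paper's: bound the numerator by $1$ on $(-1,1)$, write $\frac{1+\nu}{1-\nu}=\cosh w$, use $\cosh(\intm w)\ge 1+\tfrac{m^2w^2}{2}$, and reduce to $w\ge\sqrt{\nu/(1-\nu)}$. The only (harmless) divergence is in that last elementary step, where you invoke monotonicity of $\cosh$ and the inequality $\cosh s\le 1+s^2$ for small $s$, whereas the paper solves the quadratic for $e^{w/2}$ and uses $2\log(1+u)\ge u$; both are valid for $\nu\le\tfrac14$.
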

\begin{proof}
The relation $\step_{m,\nu}(1)=1$ trivially holds. Since $T_{\intm}\br{-1+\frac{2x}{1-\nu}}\in \{-1,1\}$ for $x\in (0, 1-\nu)$, we have $|\step_{m, \nu}(x)| \leq \frac{1}{T_{\intm}\br{\frac{1+\nu}{1-\nu}}}$ for $x\in (0, 1-\nu)$. We wish to upper bound $\frac{1}{T_{\intm}\br{\frac{1+\nu}{1-\nu}}}$. Let $w$ be such that $\cosh(w)= \frac{1+\nu}{1-\nu}$. Then 
\begin{equation}
\label{chebylowb}
T_{\intm}\br{\frac{1+\nu}{1-\nu}} = \cosh(\intm w) \geq 1+ \frac{\intm^2w^2}{2} \geq 1+ \frac{m^2w^2}{2}.
\end{equation}
 Now,
$$\frac{1+\nu}{1-\nu} = \cosh(w) = \frac{e^w+e^{-w}}{2} \implies \frac{2\nu}{1-\nu} = \frac{e^w+e^{-w}-2}{2} = \frac{(e^{\frac{w}{2}} - e^{-\frac{w}{2}})^2}{2}.$$ This implies
$$e^{\frac{w}{2}} - e^{-\frac{w}{2}} = 2\sqrt{\frac{\nu}{1-\nu}}.$$ Solving the quadratic equation for $e^{\frac{w}{2}}$, we find 
$$e^{\frac{w}{2}} = \sqrt{1+ \frac{\nu}{1-\nu}} + \sqrt{\frac{\nu}{1-\nu}} \geq 1+ \sqrt{\frac{\nu}{1-\nu}}.$$ Thus, $$w \geq 2\log\br{1+ \sqrt{\frac{\nu}{1-\nu}}} \geq \sqrt{\frac{\nu}{1-\nu}},$$ for $\nu \leq \frac{1}{4}$. Equation \ref{chebylowb} now implies
$$T_{\intm}\br{\frac{1+\nu}{1-\nu}} \geq 1+ \frac{m^2w^2}{2} \geq 1+ \frac{m^2\nu}{2(1-\nu)},$$
which leads to 
$$|\step_{m, \nu}(x)| \leq \frac{1}{T_{\intm}\br{\frac{1+\nu}{1-\nu}}} \leq \frac{1}{1+ \frac{m^2\nu}{2(1-\nu)}},$$ for $x\in (0,1-\nu)$.
\end{proof}

\section{Formal set-up and the main result}
\label{sec:setupres}

Here, we introduce notations to analyze both the open chain and closed chain of qudits \footnote{We have shifted to the terminology `qudits' instead of `spins', which is more standard in quantum information.} . Let $[a:b]$ denote the set $\{a,a+1, \ldots b\}$. Consider a one dimensional closed chain of $n$ qudits, indexed by integers $\{1,2,\ldots n\}$, of potentially varying dimensions. The indices of the qudits are taken in a manner that the $n+k$-th index is the same as the $k$-th index.  Introduce a nearest-neighbour local hamiltonian
\begin{equation}
H=\sum_{i=1}^n H_i
\label{eq:twoloc}
\end{equation}
where $H_i$ is a Hermitian operator which acts nontrivially only on qudits $i,i+1$. Further assume that $H_i$ admits a decomposition
\begin{equation}
H_i = \sum_j P_{ij},
\label{eq:Hidecompose}
\end{equation}
where $P_{ij}$ are projectors that act non trivially only on qudits $i,i+1$. We have the following assumptions on the set of projectors $\{P_{ij}\}_{i,j}$:
\begin{itemize}
\item Each $P_{ij}$ does not commute with at most $g$ other terms from the set $\{P_{ij}\}_{i,j}$.
\item The projectors can be divided into $L$ layers $T_1, T_2, \ldots T_L$, where the terms within each layer mutually commute.
\end{itemize}
We further assume the $H$ is frustration-free, which means that the ground energy is zero. Let $G$ be the ground space of $H$. Note that frustration-freeness implies that $P_{ij}G=0$. We shall write $G_{\perp}$ for the subspace of states orthogonal to $G$. 

If we are interested in an open chain of qudits, then we simply assume that $H_n=0$ (note that we are not considering the translationally invariant case). This will lead to some minor changes that we will highlight as the arguments proceed.

For a contiguous subset $S$ of the chain, let $h_S\defeq \sum_{i: i, i+1\in S}H_i$ be the local hamiltonian made out of terms in Eq.~\ref{eq:twoloc} that are entirely supported in $S$. Define $\gamma(S)$ to be the smallest nonzero eigenvalue of $h_S$  and let $\gamma\defeq \gamma\br{[1:n]}$ be the spectral gap of $H$. Let $$\gamma(t)\defeq \min_{a} \gamma\br{[a:a+t-1]}$$ denote the minimum spectral gap over all continuous segments of length $t$. Observe that the set of continuous segments are different for the open chain and the closed chain. Thus the minimization over $a$ in above expression requires the additional condition that $a\in [1:n-t+1]$ for the open chain. Our main theorem is as follows, which upper bounds $\gamma(t)$  in terms of the spectral gap $\gamma$. The statement remains the same for both the open chain and the closed chain.
\begin{theorem}
\label{theo:knabe}
Suppose $\gamma\leq \frac{g^2}{4}$. For every integer $8L^2<t<n/5$, it holds that
$$\gamma(t)\leq \frac{10^3L^2g^2}{t^2} + 6\gamma.$$
\end{theorem}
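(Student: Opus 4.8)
The plan is to run the three-step chain sketched in the proof outline: (i) relate the gap $\gamma(\bar H(t))$ of the coarse-grained hamiltonian to the local gap $\gamma(t)$ via the observation $\gamma(\bar H(t))\le 2\gamma/\gamma(t)$ in Equation \ref{eq:gossetineq}; (ii) lower bound $\gamma(\bar H(t))$ in terms of the global gap $\gamma$ using the converse of the detectability lemma (Lemma \ref{convdetectorig}) together with the shrinking estimate for low-degree Chebyshev polynomials (Claim \ref{clm:chebylowdeg}); (iii) combine the two bounds and solve for $\gamma(t)$. More precisely, for a suitable choice of segment length $t$, I would build the coarse-grained hamiltonian $\bar H(t)=\sum_S Q_S$ where each $Q_S$ projects onto the support of $h_S$ on a block of $t$ consecutive qudits, chosen so that the blocks $S$ themselves form a constant number of commuting layers (the $g$ and $L$ in the hypotheses control the geometry of the original terms; after coarse-graining, the overlap degree and layer count of the $Q_S$'s are bounded by functions of $g$ and $L$). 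Crucially $\bar H(t)$ has the same ground space $G$ as $H$, and for any $\ket\psi\in G_\perp$ one has $\|(\prod_S Q_S^\perp)\ket\psi\|$ controlled by $\gamma(\bar H(t))$; this is exactly where the detectability lemma and its converse enter.

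For step (ii) the key point is that one iteration of the detectability operator $\prod_S Q_S^\perp$ is itself a degree-$O(1)$ operator (in terms of layers $L$), and the detectability lemma gives that it shrinks $G_\perp$ by a factor bounded away from $1$ in terms of $\gamma(\bar H(t))$; conversely, applying a low-degree Chebyshev polynomial $\step_{m,\nu}$ (with $\nu$ a constant and $m$ proportional to the light-cone-limited degree we can afford) to $H$ itself shrinks $G_\perp$ as well, and Claim \ref{clm:chebylowdeg} quantifies this by $1/(1+\Theta(m^2\gamma))$. Matching the two shrinking rates — the combinatorial one from iterating the detectability operator and the analytic one from the Chebyshev polynomial applied to $H$ — yields $\gamma(\bar H(t))=\Omega\!\big(\tfrac{t^2\gamma}{1+t^2\gamma}\big)$ up to constants depending on $L$ and $g$; the light-cone argument (Section \ref{append:lightcone}) is needed to ensure that a polynomial of degree $\sim t$ in $H$ still acts within the coarse-grained blocks, so that the comparison is legitimate.

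Finally, for step (iii): plug $\gamma(\bar H(t))=\Omega\!\big(\tfrac{t^2\gamma}{L^2 g^2 + t^2\gamma}\big)$ into Equation \ref{eq:gossetineq}, obtaining
\begin{equation*}
\frac{t^2\gamma}{c_1 L^2 g^2 + t^2\gamma}\le \frac{2\gamma}{\gamma(t)}
\end{equation*}
for an absolute constant $c_1$; rearranging gives $\gamma(t)\le \tfrac{2\gamma}{t^2\gamma}\,(c_1 L^2g^2+t^2\gamma) = \tfrac{2c_1 L^2 g^2}{t^2}+2\gamma$, and tracking the constants lost in the detectability lemma, in Claim \ref{clm:chebylowdeg}, and in the block-overlap counting yields the stated $\tfrac{10^3 L^2 g^2}{t^2}+6\gamma$. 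The side conditions $8L^2<t$ and $t<n/5$ are exactly what is needed so that (a) the coarse-grained blocks genuinely split into $O(1)$ commuting layers with room to spare, and (b) the light cone of the degree-$\sim t$ polynomial does not wrap around the chain (or run off the open chain), so that the detectability comparison is valid; the hypothesis $\gamma\le g^2/4$ is the regime where the additive $1/t^2$ term is the interesting one and keeps $\nu$ inside $(0,\tfrac14)$.

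The main obstacle I anticipate is step (ii) — specifically, carefully organizing the coarse-grained terms $Q_S$ into a bounded number of mutually commuting layers and controlling their overlap degree in terms of $g$ and $L$, and then making the detectability-lemma / converse-detectability comparison quantitatively tight enough that the degree-$m$ Chebyshev polynomial (with $m=\Theta(t)$) yields the clean $t^2$ scaling rather than, say, $t$ or $t^2/\mathrm{polylog}$. Getting the light-cone bookkeeping right so that a degree-$\Theta(t)$ function of $H$ is supported within a constant number of coarse-grained blocks is the delicate part; everything else is bookkeeping of constants.
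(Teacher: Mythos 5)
Your proposal is correct and follows essentially the same route as the paper: the chain $\gamma(\bar H(t))\le 2\gamma/\gamma(t)$, the lower bound $\gamma(\bar H(t))=\Omega\br{\frac{t^2\gamma}{L^2g^2+t^2\gamma}}$ via the converse detectability lemma plus the light-cone absorption of a degree-$\Theta(t/L)$ Chebyshev polynomial into the overlapping blocks, and the final rearrangement are exactly the paper's argument. The only cosmetic imprecision is that the Chebyshev polynomial is applied to $DL(H)^{\dagger}DL(H)$ (whose $G_\perp$-spectrum is pushed below $1-\frac{\gamma}{g^2+\gamma}$ by the detectability lemma) rather than to $H$ itself, and the two commuting layers of coarse-grained blocks are realized concretely as the interleaved families $\{S_k\}$ and $\{T_k\}$.
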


Note that we have not tried to optimize the parameters appearing in the above expression. For specific applications, it may be possible to obtain stronger bounds. The rest of the section is devoted to the proof of Theorem \ref{theo:knabe}. 

\subsection{Detectability lemma}

Detectability lemma \cite{AharonovALV08} is an important tool for the study of frustration-free systems. It's central object is the detectability lemma operator, defined as a product of projectors $\id-P_{ij}$ taken layer by layer. More precisely, define 
$$DL(H)\defeq \prod_{\alpha\in [1:L]}\prod_{i,j\in T_{\alpha}}(\id-P_{ij}).$$ The following lemma holds, the statement of which is taken from \cite[Corollary 3]{AAV16}.

\begin{lemma}[Detectability lemma, \cite{AharonovALV08}]
\label{detectlem}
For any quantum state $\psi\in G_{\perp}$, we have 
$$\|DL(H)\ket{\psi}\|^2 \leq \frac{1}{1+\gamma/ g^2}.$$
\end{lemma}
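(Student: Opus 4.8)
The plan is to show that $DL(H)$ contracts the norm of every state orthogonal to $G$ by the claimed factor, trading the spectral gap $\gamma$ for a contraction via a term-by-term energy accounting. Since the projectors inside a layer commute, $DL(H)$ is an ordered product $F_N F_{N-1}\cdots F_1$ of single-term projectors $F_k := \id - P_{i_k j_k}$, where the ordering merely groups the $L$ layers. Because each $P_{ij}$ annihilates $G$, the ground-space projector $\Pi_G$ satisfies $\Pi_G F_k = \Pi_G$ for every $k$, hence $\Pi_G\, DL(H) = \Pi_G$; so for a unit $\ket\psi\in G_{\perp}$ the vector $\ket\phi := DL(H)\ket\psi$ again lies in $G_{\perp}$, and frustration-freeness ($H\succeq\gamma\,(\id-\Pi_G)$) gives
$$\sum_{k=1}^N \|P_{i_kj_k}\ket\phi\|^2 \;=\; \bra\phi H\ket\phi \;\geq\; \gamma\,\|\ket\phi\|^2 .$$
It therefore suffices to prove the matching upper bound $\sum_k \|P_{i_kj_k}\ket\phi\|^2 \leq g^2\,(1-\|\ket\phi\|^2)$, since combining the two yields $\gamma\|\ket\phi\|^2\leq g^2(1-\|\ket\phi\|^2)$, i.e. $\|DL(H)\ket\psi\|^2 = \|\ket\phi\|^2 \leq g^2/(g^2+\gamma) = 1/(1+\gamma/g^2)$.

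For the upper bound I would first record the telescoping identity. With $\eta_k := F_k\cdots F_1\ket\psi$ (so $\eta_0=\ket\psi$, $\eta_N=\ket\phi$), the fact that $F_k$ is an orthogonal projector gives $\|\eta_{k-1}\|^2-\|\eta_k\|^2 = \|P_{i_kj_k}\eta_{k-1}\|^2$, and summing over $k$,
$$1-\|\ket\phi\|^2 \;=\; \sum_{k=1}^N \|P_{i_kj_k}\eta_{k-1}\|^2 .$$
The heart of the argument is to bound each ``output energy'' $\|P_{i_\ell j_\ell}\ket\phi\|$ using only the telescoping terms of the at most $g$ neighbours of $P_{i_\ell j_\ell}$. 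Fix $\ell$; note $P_{i_\ell j_\ell}F_\ell=0$ and that $P_{i_\ell j_\ell}$ commutes with every $F_k$ whose term commutes with $P_{i_\ell j_\ell}$. Starting from $\ket\phi = F_N\cdots F_{\ell+1}F_\ell\,\eta_{\ell-1}$, I would push $P_{i_\ell j_\ell}$ rightward through $F_N,\dots,F_{\ell+1}$: it passes freely through commuting factors, and at each of the at most $g$ factors $F_k$ ($k>\ell$) whose term anticommutes with $P_{i_\ell j_\ell}$ one writes $P_{i_\ell j_\ell}F_k = P_{i_\ell j_\ell} - P_{i_\ell j_\ell}P_{i_kj_k}$, producing one ``error'' summand of the form (operator of norm $\le1$) applied to $P_{i_\ell j_\ell}P_{i_kj_k}\,\eta_{k-1}$, plus a ``main'' summand in which $P_{i_\ell j_\ell}$ continues rightward with $F_k$ deleted. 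Once all anticommuting factors are consumed the main summand reaches $P_{i_\ell j_\ell}F_\ell=0$ and vanishes, so $\|P_{i_\ell j_\ell}\ket\phi\| \leq \sum_{k}\|P_{i_kj_k}\eta_{k-1}\|$ over the at most $g$ indices $k>\ell$ with $P_{i_kj_k}$ anticommuting with $P_{i_\ell j_\ell}$; Cauchy--Schwarz then gives $\|P_{i_\ell j_\ell}\ket\phi\|^2 \leq g\sum_{k}\|P_{i_kj_k}\eta_{k-1}\|^2$.

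Summing over $\ell$ finishes the proof: in $\sum_\ell \sum_k \|P_{i_kj_k}\eta_{k-1}\|^2$ each term $\|P_{i_kj_k}\eta_{k-1}\|^2$ is counted once for every $\ell<k$ with $P_{i_\ell j_\ell}$ anticommuting with $P_{i_kj_k}$, and there are at most $g$ such $\ell$; thus $\sum_\ell \|P_{i_\ell j_\ell}\ket\phi\|^2 \leq g^2 \sum_k \|P_{i_kj_k}\eta_{k-1}\|^2 = g^2\,(1-\|\ket\phi\|^2)$, which is the required bound, and the lemma follows by the arithmetic above. The step I expect to demand the most care is the ``push to the right'' maneuver: one must track the error summands so that each is charged to exactly one telescoping term $\|P_{i_kj_k}\eta_{k-1}\|$, and confirm that the degree bound $g$ enters only twice — once in the Cauchy--Schwarz over the anticommuting neighbours of $P_{i_\ell j_\ell}$, and once in the final charging count over $\ell$ — which is what pins the constant at $g^2$ (and hence the shrinkage at $1/(1+\gamma/g^2)$ rather than something weaker). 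This is in essence the ``simple proof'' of the detectability lemma from \cite{AAV16}, which I would follow.
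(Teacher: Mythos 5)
Your proof is correct: the telescoping identity $1-\|DL(H)\ket{\psi}\|^2=\sum_k\|P_{i_kj_k}\eta_{k-1}\|^2$, the push-through expansion that charges each error summand to a single term $\|P_{i_kj_k}\eta_{k-1}\|$ (with the main summand killed by $P_{i_\ell j_\ell}F_\ell=0$), the Cauchy--Schwarz over the at most $g$ non-commuting neighbours, and the final double counting (again a factor $g$) all check out and yield $\gamma\|\phi\|^2\le\bra{\phi}H\ket{\phi}\le g^2\br{1-\|\phi\|^2}$ for $\phi=DL(H)\ket{\psi}\in G_\perp$, which is exactly the claimed $1/(1+\gamma/g^2)$. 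The paper itself gives no proof of this lemma---it imports the statement from \cite{AAV16} (Corollary 3; the result originates in \cite{AharonovALV08})---and your argument is precisely that ``simple proof'' of \cite{AAV16}, so there is nothing to reconcile; the only nits are that ``anticommutes'' should read ``does not commute'' (your identity $P_{i_\ell j_\ell}F_k=P_{i_\ell j_\ell}-P_{i_\ell j_\ell}P_{i_kj_k}$ uses nothing more), and it is worth noting your argument never actually uses the layer structure, only that each factor $\id-P_{ij}$ appears once and is a projector.
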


A converse result stated in \cite[Lemma 4]{AAV16} is a corollary of \cite{Gao2015}. 
\begin{lemma}[Converse to detectability lemma, \cite{AAV16, Gao2015}]
\label{convdetectorig}
For any quantum state $\psi$,
$$\|DL(H)\ket{\psi}\|^2 \geq 1- 4\bra{\psi}H\ket{\psi}.$$
\end{lemma}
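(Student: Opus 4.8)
The plan is to reduce Lemma~\ref{convdetectorig} to the non-commutative (``quantum'') union bound of Gao~\cite{Gao2015}, after first repackaging $DL(H)$ as a product of only $L$ projectors, one per commuting layer. Fix a unit vector $\psi$. For each layer $\alpha\in[1:L]$ the terms $\{P_{ij}\}_{(ij)\in T_\alpha}$ mutually commute, so $\Pi_\alpha\defeq\prod_{(ij)\in T_\alpha}(\id-P_{ij})$ is the orthogonal projector onto $\bigcap_{(ij)\in T_\alpha}\ker P_{ij}$, and (after relabelling the layers) $DL(H)=\Pi_L\Pi_{L-1}\cdots\Pi_1$. Diagonalizing the commuting family of layer $\alpha$ in a common eigenbasis shows that $\id-\Pi_\alpha$ acts on an eigenvector as the indicator ``some $P_{ij}$ fires'', while $\sum_{(ij)\in T_\alpha}P_{ij}$ \emph{counts} how many fire; hence the operator inequality $\id-\Pi_\alpha\preceq\sum_{(ij)\in T_\alpha}P_{ij}$. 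Writing $\epsilon_\alpha\defeq\|(\id-\Pi_\alpha)\psi\|^2=1-\|\Pi_\alpha\psi\|^2$ and summing over $\alpha$ (the layers partition the index set of the $P_{ij}$'s, and $H=\sum_{ij}P_{ij}$) gives
$$\sum_{\alpha=1}^{L}\epsilon_\alpha\;\le\;\bra{\psi}\Big(\sum_{\alpha=1}^{L}\sum_{(ij)\in T_\alpha}P_{ij}\Big)\ket{\psi}\;=\;\bra{\psi}H\ket{\psi}.$$

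It then remains to prove that for any projectors $\Pi_1,\dots,\Pi_L$ and unit $\psi$,
$$\|\Pi_L\Pi_{L-1}\cdots\Pi_1\,\psi\|^2\;\ge\;1-4\sum_{\alpha=1}^{L}\big(1-\|\Pi_\alpha\psi\|^2\big);$$
combined with the previous display this yields $\|DL(H)\psi\|^2\ge1-4\sum_\alpha\epsilon_\alpha\ge1-4\bra{\psi}H\ket{\psi}$, which is the claim. This displayed inequality is exactly Gao's quantum union bound~\cite{Gao2015} specialized to the pure state $\ketbra{\psi}$, and I would reproduce its proof in the appendix for completeness. A convenient route is to first establish the amplitude version $\mathrm{Re}\,\bra{\psi}\Pi_L\cdots\Pi_1\ket{\psi}\ge1-2\sum_\alpha\epsilon_\alpha$ and then square it, using $\|v\|\ge|\braket{\psi}{v}|\ge\mathrm{Re}\,\braket{\psi}{v}$ for the unit vector $\psi$; the factor $2$ in the amplitude bound is precisely what produces the $4$.

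The heart of the matter — and the main obstacle — is this last step. The naive telescoping $\id-\Pi_L\cdots\Pi_1=\sum_{k}(\id-\Pi_k)\Pi_{k-1}\cdots\Pi_1$ together with Cauchy--Schwarz on each summand only delivers a bound of the shape $1-O\big((\sum_k\sqrt{\epsilon_k})^2\big)$, since one can only say $\|(\id-\Pi_k)\Pi_{k-1}\cdots\Pi_1\psi\|\le\sqrt{\epsilon_k}+\|\Pi_{k-1}\cdots\Pi_1\psi-\psi\|$, and the obvious attempt to iterate this to replace $\sqrt{\epsilon_k}$ by $\epsilon_k$ blows up geometrically in $L$ because $\|\Pi_{k-1}\cdots\Pi_1\psi-\psi\|$ is controlled only by the \emph{sum} of earlier increments. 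Obtaining the estimate that is linear in the $\epsilon_\alpha$ with an $L$-independent constant is exactly Gao's refinement, and faithfully reproducing that argument is where the real work lies. (One could alternatively try to prove the operator inequality $DL(H)^\dagger DL(H)\succeq\id-4H$ directly: telescoping gives $\id-DL(H)^\dagger DL(H)=\sum_k C_{k-1}^\dagger(\id-\Pi_k)C_{k-1}$ with $C_m\defeq\Pi_m\cdots\Pi_1$, so it would suffice to show $\sum_k C_{k-1}^\dagger(\id-\Pi_k)C_{k-1}\preceq 4\sum_k(\id-\Pi_k)$ and then invoke $\sum_k(\id-\Pi_k)\preceq H$ from the first paragraph; but this inequality is easy only for $L\le 2$, and the general inductive step runs into the same difficulty of bookkeeping the cross terms, so I expect to route through \cite{Gao2015}.)
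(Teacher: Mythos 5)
Your proposal is correct and follows essentially the same route as the paper: the paper gives no proof of this general-$L$ statement, presenting it as \cite[Lemma 4]{AAV16}, itself a corollary of Gao's quantum union bound \cite{Gao2015}, and your reduction (layer projectors $\Pi_\alpha$, the commuting-layer inequality $\id-\Pi_\alpha\preceq\sum_{(ij)\in T_\alpha}P_{ij}$ giving $\sum_\alpha\epsilon_\alpha\le\bra{\psi}H\ket{\psi}$, then Gao's bound for the product $\Pi_L\cdots\Pi_1$) is exactly that derivation. The only self-contained argument in the paper is the separate $L=2$, constant-$3$ improvement (Lemma \ref{convdetect}) proved via Jordan's lemma, so leaving Gao's union bound itself as a citation rather than reproving it is consistent with the paper's own treatment.
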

Here we provide a short proof (with minor improvement) in the special case of $L=2$. Proof is deferred to Appendix \ref{append:convdetect}.
\begin{lemma}
\label{convdetect}
Suppose $L=2$. It holds that
$$\|DL(H)\ket{\psi}\|^2 \geq 1- 3\bra{\psi}H\ket{\psi}.$$
\end{lemma}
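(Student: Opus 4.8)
The plan is to prove the $L=2$ converse to the detectability lemma, i.e. $\|DL(H)\ket{\psi}\|^2 \ge 1 - 3\bra{\psi}H\ket{\psi}$, by expanding the operator $DL(H) = \Pi_2 \Pi_1$ where $\Pi_\alpha = \prod_{ij \in T_\alpha}(\id - P_{ij})$ is the product of the (mutually commuting, hence simultaneously a single projector) terms in layer $\alpha$. Writing $\Pi_\alpha = \id - R_\alpha$ where $R_\alpha \succeq 0$ is itself a projector onto the union of ranges of the $P_{ij}$'s in layer $\alpha$, note that $R_\alpha \preceq \sum_{ij \in T_\alpha} P_{ij} =: H_\alpha$ since the $P_{ij}$ in a layer commute. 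Then $DL(H) = (\id - R_2)(\id - R_1) = \id - R_1 - R_2 + R_2 R_1$.

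First I would compute $\|DL(H)\ket\psi\|^2 = \bra\psi (\id - R_1)(\id - R_2)(\id - R_2)(\id - R_1)\ket\psi = \bra\psi(\id - R_1)(\id - R_2)(\id - R_1)\ket\psi$, using $(\id - R_2)^2 = \id - R_2$. Expanding this out gives $\bra\psi(\id - R_1)\ket\psi - \bra\psi(\id-R_1)R_2(\id-R_1)\ket\psi$. The first term is $1 - \bra\psi R_1 \ket\psi \ge 1 - \bra\psi H_1\ket\psi$. For the second term, I would bound $\bra\psi(\id-R_1)R_2(\id-R_1)\ket\psi = \|R_2(\id - R_1)\ket\psi\|^2 \le \|R_2 \ket\psi\|^2 + \text{cross terms}$ — more carefully, write $R_2(\id - R_1)\ket\psi = R_2\ket\psi - R_2 R_1 \ket\psi$ and expand the norm-square into $\|R_2\ket\psi\|^2 - 2\,\mathrm{Re}\bra\psi R_1 R_2 \ket\psi + \|R_2 R_1 \ket\psi\|^2$. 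Using $\|R_2 R_1\ket\psi\| \le \|R_1\ket\psi\|$ and Cauchy–Schwarz on the cross term, together with $\|R_\alpha\ket\psi\|^2 = \bra\psi R_\alpha \ket\psi \le \bra\psi H_\alpha\ket\psi$, this should be controllable in terms of $\bra\psi H_1\ket\psi + \bra\psi H_2\ket\psi = \bra\psi H\ket\psi$; the numerology needs to close at a total coefficient of $3$ rather than the $4$ of the general bound, which is where the $L=2$ special structure buys the improvement.

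The main obstacle I anticipate is getting the constant down to exactly $3$: the crude triangle-inequality bound on $\|R_2(\id-R_1)\ket\psi\|^2$ tends to produce a coefficient like $4$ (one gets $\bra\psi R_2\ket\psi + 2\sqrt{\bra\psi R_1\ket\psi\bra\psi R_2\ket\psi} + \bra\psi R_1\ket\psi$ and then AM–GM on the middle term doubles things). To shave it, I would instead keep the term $-2\,\mathrm{Re}\bra\psi R_1 R_2\ket\psi$ organized more cleverly — for instance by noting $\|R_2\ket\psi - R_2R_1\ket\psi\|^2 = \|R_2(\ket\psi - R_1\ket\psi)\|^2 \le \|\ket\psi - R_1\ket\psi\|^2 = \bra\psi(\id - R_1)\ket\psi$ since $R_1$ is a projector, which immediately gives $\|DL(H)\ket\psi\|^2 \ge \bra\psi(\id-R_1)\ket\psi - \bra\psi(\id - R_1)\ket\psi$... that's too lossy in the other direction. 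The right balance is probably $\|R_2(\id-R_1)\ket\psi\|^2 \le \min\{\bra\psi R_2\ket\psi,\ \bra\psi(\id-R_1)\ket\psi\}$ combined with splitting $\bra\psi(\id - R_1)R_2(\id - R_1)\ket\psi$ using that $(\id - R_1)\ket\psi$ has norm-square $1 - \bra\psi R_1\ket\psi$ and applying $R_2 \preceq H_2$ to that vector directly: $\bra\psi(\id-R_1)R_2(\id-R_1)\ket\psi \le \bra\psi(\id - R_1)H_2(\id - R_1)\ket\psi$. Then I would expand $H_2$ against $(\id - R_1)$ and use that terms of $H_2$ far from layer-$1$ terms commute with $R_1$, absorbing the commutator defects into the remaining budget. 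The bookkeeping here — deciding exactly which terms to overcount and by how much — is the delicate part; everything else is routine operator-inequality manipulation.
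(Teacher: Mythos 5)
Your approach is correct, and it is genuinely different from the paper's. The paper also starts from $\Pi_\alpha=\id-R_\alpha$ and $R_1+R_2\preceq H$, but then invokes Jordan's lemma to block-diagonalize $\Pi_1,\Pi_2$ into $2\times 2$ blocks and proves the operator inequality $\Pi_1+\Pi_2\preceq \tfrac13 DL^{\dagger}(H)DL(H)+\tfrac53\id$ blockwise, which combined with $\Pi_1+\Pi_2\succeq 2\id-H$ gives the lemma. Your route avoids Jordan's lemma entirely: the identity $\|DL(H)\ket{\psi}\|^2=\bra{\psi}(\id-R_1)\ket{\psi}-\|R_2(\id-R_1)\ket{\psi}\|^2$ plus Cauchy--Schwarz is all that is needed, and --- contrary to your worry --- the numerology already closes at $3$. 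Writing $a=\|R_1\ket{\psi}\|$, $b=\|R_2\ket{\psi}\|$, your expansion gives $\|R_2(\id-R_1)\ket{\psi}\|^2\le (a+b)^2\le 2a^2+2b^2$, hence $\|DL(H)\ket{\psi}\|^2\ge 1-a^2-(a+b)^2\ge 1-3a^2-2b^2\ge 1-3\bigl(\bra{\psi}R_1\ket{\psi}+\bra{\psi}R_2\ket{\psi}\bigr)\ge 1-3\bra{\psi}H\ket{\psi}$; the feared coefficient $4$ never appears because the $a^2$ from the first term and the $2a^2$ from AM--GM only add to $3$. So the entire final paragraph of hedging (the $\min$ bound, expanding $H_2$ against $\id-R_1$, tracking commutator defects) is unnecessary and should be cut --- some of those detours would indeed be lossy or hard to close. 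As for what each approach buys: yours is more elementary and self-contained; the paper's yields a clean operator-level inequality relating $\Pi_1+\Pi_2$ to $DL^{\dagger}(H)DL(H)$ and makes the optimal threshold $\nu\le\frac{3-\sqrt5}{2}$ explicit. Amusingly, optimizing your scalar inequality ($\max_{a^2+b^2=h}\,a^2+(a+b)^2=\frac{3+\sqrt5}{2}h$) gives exactly the same best constant $\frac{3+\sqrt5}{2}\approx 2.62$ as the paper's, so the two arguments are quantitatively equivalent.
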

\subsection{Coarse-grained Hamiltonian}

Another tool that we will use is the notion of coarse-grained Hamiltonian \cite{AradLV12, AAV16}. Let $Q_S$ be the projector orthogonal to the ground space of $h_S$. By convention, we set $Q_{\phi}=0$ for the empty set $\phi$. Fix a coarse-graining parameter $8L^2< t < n/5$ and let $\nbyt = \lfloor \frac{n}{t} \rfloor$ and $r = n- t\cdot\nbyt$ respectively be the quotient and remainder when $n$ is divided by $t$. Identify sets $S_1, S_2, \ldots S_{\nbyt}$ using the following rules. 
\begin{itemize}
\item $S_k \defeq [s_k: s'_k]$ with $1\leq s_1 < s'_1 < s_2 < s'_2 \ldots < s_{\nbyt} < s'_{\nbyt} \leq n$. Further, $|S_k|=t$.
\item Let $r_k = s_{k+1}-s'_k-1$ be the number of qudits sandwiched between $S_k, S_{k+1}$, for $k\in [1:\nbyt-1]$. Let $r_{\nbyt}= n- s'_{\nbyt} + s_1-1$ be the number of qudits sandwiched between $S_{\nbyt}$ and $S_1$. Observe that $\sum_{k=1}^{\nbyt}r_k=r$. We require that $r_k$ are not too large. That is, $r_k\leq \lceil r/\nbyt \rceil$ for all $k$. Since $\frac{n}{t}>5$, this implies that 
\begin{equation}
\label{eq:rkupb}
r_k\leq \lceil r/5 \rceil \leq t/4.
\end{equation}
\item For open chain, with $H_n=0$, we require $s_1=1$ and $s'_{\nbyt}=n$. 
\end{itemize}
Next, choose another collection of $\nbyt$ continuous sets of size $t$ each, which are placed `halfway' between adjacent $S$'s. More precisely, the sets $T_1, T_2, \ldots T_{\nbyt}$ have the following properties.
\begin{itemize}
\item For $k<\nbyt$, $T_k = [s'_k - \lfloor \frac{t-r_k}{2} \rfloor+1 : s_{k+1} + \lceil \frac{t-r_k}{2}\rceil-1]$.
\item For the open chain (with $H_n=0$), let $T_{\nbyt}=\phi$. For the closed chain, let $$T_{\nbyt} = [s'_{\nbyt} - \lfloor \frac{t-r_{\nbyt}}{2} \rfloor+1 : s_1 + \lceil \frac{t-r_{\nbyt}}{2}\rceil-1].$$ 
\end{itemize}

Two examples of these sets are depicted in Figures \ref{fig:tseg} and \ref{fig:openseg}. Observe that the set $T_k$ has an overlap of at least $\lfloor \frac{t-r_k}{2} \rfloor$ with the sets $S_k$ and $S_{k+1}$. Using $t\geq 8L^2\geq 8$ and Equation \ref{eq:rkupb}, this can be lower bounded by 
\begin{equation}
\label{eq:overlaplb}
\lfloor\frac{t-r_k}{2} \rfloor\geq \lfloor \frac{t-t/4}{2} \rfloor = \lfloor \frac{3t}{8} \rfloor \geq \frac{3t}{8}-1 = \frac{t}{4} + \frac{t}{8}-1\geq \lfloor\frac{t}{4}\rfloor.
\end{equation}

\begin{figure}
\centering
\begin{tikzpicture}[xscale=0.4,yscale=0.6]
\foreach \k in {0,...,2}
{
\draw[draw=black, fill=green!30!white] (0.7+6*\k,-1.5) rectangle (5.3+6*\k,-0.5);

\draw[draw=black, fill=red!30!white] (3.7+6*\k,1.5) rectangle (8.3+6*\k,0.5);
}

\foreach \k in {3,...,5}
{
\draw[draw=black, fill=green!30!white] (2.7+5*\k,-1.5) rectangle (7.3+5*\k,-0.5);

\draw[draw=black, fill=red!30!white] (5.7+5*\k,1.5) rectangle (10.3+5*\k,0.5);
}

\draw[draw=black, fill=green!30!white] (2.7+5*6,-1.5) rectangle (7.3+5*6,-0.5);

\draw[draw=black, fill=blue!30!white] (5.7+5*6,1.5) rectangle (10.3+5*6-3,0.5);
\draw[draw=black, fill=blue!30!white] (0.7,1.5) rectangle (3.3,0.5);

\foreach \i in {1,...,37}
{
   \draw (\i, 0) node[circle, fill=black, scale=0.5]{};
}

\end{tikzpicture}
  \caption{\small Dividing the chain into contiguous segments of length $t$: Here, we assume $n=37$ and $t=5$. The remainder when $n$ is divided by $t$ is $2$. We set $r_1=r_2=1$ and $r_k=0$ for $k>2$. The green rectangles represent the sets $S_j$. The red and the blue rectangles represent the sets $T_j$. The blue rectangles are to be viewed as a single contiguous region on the closed chain when $H_n\neq 0$ and are assumed to not exist on the open chain when $H_n=0$. The first three rectangles, of both green and red kind, are separated by one qudit.  \label{fig:tseg}}
\end{figure}
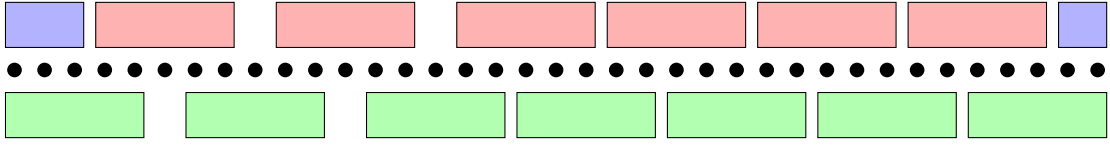

\begin{figure}
\centering
\begin{tikzpicture}[xscale=0.4,yscale=0.6]
\draw[draw=black, fill=green!30!white] (0.7,-1.5) rectangle (18.3,-0.5);
\draw[draw=black, fill=green!30!white] (20.7,-1.5) rectangle (38.3,-0.5);

\draw[draw=black, fill=red!30!white] (9.7,1.5) rectangle (27.3,0.5);

\foreach \i in {1,...,38}
{
   \draw (\i, 0) node[circle, fill=black, scale=0.5]{};
}

\end{tikzpicture}
  \caption{\small Assume $n=38$, $t=18$ and $H_n=0$ (open chain). In this case, $r=2$. There is exactly one set $T_1$ and two sets $S_1, S_2$.   \label{fig:openseg}}
\end{figure}
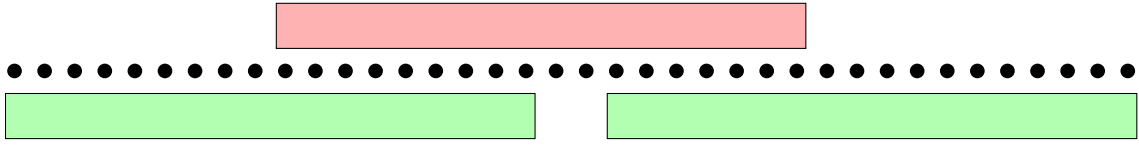

Following \cite{AAV16}, we define the coarse-grained hamiltonian 
$$\bar{H}(t)\defeq \sum_{k}\br{ Q_{S_k} + Q_{T_k}}$$
and the corresponding detectability operator
$$DL(t)\defeq \br{\prod_{k}\br{\id - Q_{S_k}}}\br{\prod_{k}\br{\id - Q_{T_k}}}.$$ 
Observe that the ground space of $\bar{H}(t)$ coincides with $G$. Let the spectral gap of $\bar{H}(t)$  be $\gamma(\bar{H}(t))$. The following lemma was shown in \cite{AAV16}.  We provide its proof in Appendix \ref{append:lightcone}, for completeness.
\begin{lemma}
\label{lightconelem}
It holds that
$$1- 3\gamma(\bar{H}(t))\leq\max_{\psi\in G_{\perp}}\|DL(t)\ket{\psi}\|^2 \leq \max_{x\in (0, 1- \frac{\gamma}{g^2+\gamma})}\step_{\frac{t}{8L}, \frac{\gamma}{g^2+\gamma}}(x).$$
\end{lemma}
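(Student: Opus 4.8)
\textbf{Proof proposal for Lemma \ref{lightconelem}.}

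The plan is to prove the two inequalities separately, using the converse detectability lemma (Lemma \ref{convdetect}, the $L=2$ case) for the left-hand inequality and the detectability lemma together with the Chebyshev-shrinking estimate (Claim \ref{clm:chebylowdeg}) for the right-hand inequality. The key structural observation is that $DL(t)$ is precisely the detectability operator $DL(\bar H(t))$ for the coarse-grained hamiltonian $\bar H(t)$, which has exactly $L=2$ layers: the layer $\{Q_{S_k}\}_k$ (these commute since the $S_k$ are disjoint) and the layer $\{Q_{T_k}\}_k$ (these commute since the $T_k$ are disjoint). Moreover each $Q_{S_k}$ fails to commute with at most $2$ of the $Q_{T_j}$'s (namely the two that overlap it), and symmetrically, so the commutation degree of the coarse-grained system is $g_{\mathrm{cg}}=2$.

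For the left-hand inequality, I would simply apply Lemma \ref{convdetect} to the hamiltonian $\bar H(t)$: for any $\psi\in G_\perp$, $\|DL(t)\ket\psi\|^2 = \|DL(\bar H(t))\ket\psi\|^2 \geq 1 - 3\bra\psi \bar H(t)\ket\psi \geq 1 - 3\gamma(\bar H(t))\|\psi\|^2$, where the last step uses that the smallest nonzero eigenvalue of $\bar H(t)$ is $\gamma(\bar H(t))$ and $\psi\perp G$ (one should note $\bar H(t)$ has the same ground space $G$, as stated just before the lemma). Taking the maximum over $\psi\in G_\perp$ gives the bound; strictly one wants the supremum, but the inequality $1-3\gamma(\bar H(t))\leq \max_\psi\|DL(t)\ket\psi\|^2$ follows by choosing $\psi$ in the $\gamma(\bar H(t))$-eigenspace.

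The right-hand inequality is the substantive part and the main obstacle. Here I would use the identity $\|DL(t)\ket\psi\|^2 = \bra\psi DL(t)^\dagger DL(t)\ket\psi$ and the standard trick from \cite{AAV16}: writing $A = \prod_k(\id-Q_{S_k})$ and $B = \prod_k(\id-Q_{T_k})$, one has $DL(t) = AB$, and since $A,B$ are products of commuting projectors they are themselves projectors, so $\|AB\ket\psi\| = \|AB\ket\psi\|$ and one studies the operator $BAB$ or equivalently bounds $\|DL(t)\|$ restricted to $G_\perp$ via $\|DL(t)|_{G_\perp}\| = \|(AB)|_{G_\perp}\|$. The detectability lemma for the \emph{original} hamiltonian $H$ (Lemma \ref{detectlem}) gives $\|DL(H)\ket\psi\|^2 \leq \frac{1}{1+\gamma/g^2} = 1 - \frac{\gamma}{g^2+\gamma}$ on $G_\perp$. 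The point is to amplify this: iterating $DL(t)$ geometrically many times corresponds to applying a polynomial in $\bar H(t)$, and using the fact that each $S_k,T_k$ window has width $t$ while $DL(H)$ only spreads information a bounded distance per layer application (a light-cone argument — this is exactly what Appendix \ref{append:lightcone} supplies), one shows that roughly $t/(8L)$ applications of $DL(H)$ fit inside the coarse-grained windows, so that $DL(t)$ acting on $G_\perp$ shrinks at least as well as $\step_{t/(8L),\,\gamma/(g^2+\gamma)}$ evaluated on the spectrum of the relevant operator, whose nonzero part lies in $(0,1-\frac{\gamma}{g^2+\gamma})$. Concretely one identifies the eigenvalue-$x$ behaviour: if $\ket\psi$ is such that $DL(H)^\dagger DL(H)$ has eigenvalue-type bound $x$, then $m=\lceil t/(8L)\rceil$ layers give the Chebyshev polynomial $\step_{t/(8L),\nu}(x)$ with $\nu = \gamma/(g^2+\gamma)$, and maximizing over $x\in(0,1-\nu)$ yields the claimed upper bound. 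I expect the delicate bookkeeping to be: (i) verifying that the overlaps computed in Equation \ref{eq:overlaplb} are large enough that $t/(8L)$ rounds of the light cone stay confined to the windows $S_k$ and $T_k$, and (ii) correctly translating "product of DL operators layered inside windows" into an honest polynomial of degree $\lceil t/(8L)\rceil$ in $\bar H(t)$ whose values on $(0,1-\nu)$ are controlled by $\step$; this is where the argument from \cite{AAV16} and Appendix \ref{append:lightcone} must be invoked carefully rather than re-derived.
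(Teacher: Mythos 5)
Your proposal follows the paper's proof essentially verbatim: the lower bound is obtained by applying the $L=2$ converse detectability lemma (Lemma \ref{convdetect}) to the coarse-grained hamiltonian $\bar H(t)$ and evaluating on a gap eigenvector, and the upper bound is obtained from the absorption identity $DL(t)=\br{\prod_k(\id-Q_{S_k})}F\br{DL(H)^{\dagger}DL(H)}\br{\prod_k(\id-Q_{T_k})}$ for a degree-$\lceil t/(8L)\rceil$ polynomial $F$ (Claim \ref{clm:chebysevDL}), with $F=\step_{t/(8L),\gamma/(g^2+\gamma)}$ and Lemma \ref{detectlem} bounding the spectrum of $DL(H)^{\dagger}DL(H)$ on $G_\perp$ by $1-\frac{\gamma}{g^2+\gamma}$. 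The only blemish is the intermediate chain $1-3\bra{\psi}\bar H(t)\ket{\psi}\geq 1-3\gamma(\bar H(t))$, which is reversed for generic $\psi\in G_\perp$, but you correctly repair it by restricting to the $\gamma(\bar H(t))$-eigenspace.
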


Now, we proceed to the proof of our main theorem.

\subsection{Proof of Theorem \ref{theo:knabe}}
\label{subsec:prooftheoknabe}

We start with the inequality for all $1\leq k \leq \nbyt$
$$Q_{S_k} + Q_{T_k}\preceq \frac{1}{\gamma(S_k)}h_{S_k} + \frac{1}{\gamma(T_k)}h_{T_k}\preceq \frac{1}{\gamma(t)}\br{h_{S_k}+ h_{T_k}}.$$
Note that the above inequality also holds in the case of open chain, as $T_{\nbyt}=\phi$ implies $Q_{T_{\nbyt}}=0$ and $h_{T_{\nbyt}}=0$. Summing over $k$ and using the definition of $\bar{H}(t)$, this implies that
$$\bar{H}(t) \preceq \frac{1}{\gamma(t)}\sum_k \br{h_{S_k} + h_{T_k}} = \frac{1}{\gamma(t)}\sum_k\br{\sum_{i: \text{Supp}(H_i)\in S_k} H_i + \sum_{i: \text{Supp}(H_i)\in T_k} H_i } \preceq \frac{2}{\gamma(t)}H.$$ Here, the last inequality holds since each $H_i$ is supported within at most one $S_k$ and at most one $T_k$. As a result, we have the following inequality (stated earlier in Equation \ref{eq:gossetineq}):
\begin{equation}
\label{coarsegrainedgaps}
\gamma(\bar{H}(t)) = \min_{\psi\in G_{\perp}}\bra{\psi}\bar{H}(t)\ket{\psi} \leq \frac{2}{\gamma(t)} \min_{\psi\in G_{\perp}}\bra{\psi}H\ket{\psi} = \frac{2\gamma}{\gamma(t)}.
\end{equation}
Lemma \ref{lightconelem} ensures that 
$$\gamma(\bar{H}(t)) \geq \frac{1}{3}\br{1-\max_{x\in (0, 1- \frac{\gamma}{g^2+\gamma})}\step_{\frac{t}{8L}, \frac{\gamma}{g^2+\gamma}}(x)}.$$ Now we use Claim \ref{clm:chebylowdeg}, setting $m=\frac{t}{8L}$ and $\nu=\frac{\gamma}{g^2+\gamma} \leq \frac{\gamma}{g^2}\leq \frac{1}{4}$. This ensures that $\frac{\nu}{1-\nu} = \frac{\gamma}{g^2}$ and we obtain
$$\gamma(\bar{H}(t)) \geq \frac{1}{3}\br{\frac{\frac{m^2\nu}{2(1-\nu)}}{1+\frac{m^2\nu}{2(1-\nu)}}}=\frac{1}{3}\cdot \frac{t^2\gamma}{128L^2g^2 +t^2\gamma} \geq  \frac{t^2\gamma}{400L^2g^2 + 3t^2\gamma}.$$
Substituting it in Equation \ref{coarsegrainedgaps}, we find
$$\frac{2\gamma}{\gamma(t)} \geq \frac{t^2\gamma}{400L^2g^2 + 3t^2\gamma} \implies \gamma(t)\leq \frac{10^3L^2g^2}{t^2} + 6\gamma.$$
This concludes the proof.

\section{Local verses global spectral gap on $D$ dimensional lattices}
\label{sec:Ddimgap}

Consider a $D$ dimensional regular lattice $\cL=[1: n_1]\times [1: n_2]\times \ldots [1: n_d]$ and let $$H_{\cL} = \sum_{\vi} P_{\vi}$$ be a frustration-free local hamiltonian, where the index $\vi$ enumerates the unit cells of the lattice and $P_{\vi}$ acts non-trivially only on the vertices of the $\vi$-th unit cell. Let $\gamma$ be the spectral gap of $H_{\cL}$. Since Theorem \ref{theo:knabe} also applies to periodic chains, the results below can similarly be extended to hamiltonians with periodic boundary conditions on the lattice. We study this model as an illustrative example, and highlight that the results below easily generalize for any local hamiltonian of constant locality on the lattice. 

In the above setting, we have $L,g \leq (3D)^D$. For a region $R\subseteq \cL$, let $\gamma(R)$ be the spectral gap of the hamiltonian
$$H_R = \sum_{{\vi}: P_{\vi}\in \text{supp}(R)} P_{\vi}.$$
 For integers $t_1, \ldots t_D$, we define $\gamma(t_1, \ldots t_D)$ as the minimum of $\gamma(R)$ over all hyper-rectangular regions $R$ of dimension $t_1\times t_2\times \ldots t_D$. Formally,
$$\gamma(t_1, \ldots t_D) = \min_{a_1, a_2, \ldots a_D:0\leq a_i\leq n_i-t_i}\gamma\bigg([a_1+1:a_1+t_1]\times [a_2+1:a_2+t_2]\times \ldots [a_D+1:a_D+t_D]\bigg).$$
We show the following theorem.
\begin{theorem}
\label{theo:rectknabe}
Suppose $2^64^{D}L < t_s < n_s/5$ for all $s\in [1:D]$ and $\gamma\leq \frac{g^2}{16^{D}}$. It holds that
$$\gamma(t_1, t_2, \ldots t_D) \leq 6^D\gamma +  200L^2g^26^{D}\cdot\frac{1}{\min_qt_q^2}$$
\end{theorem}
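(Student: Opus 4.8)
The plan is to iterate Theorem~\ref{theo:knabe} once per coordinate axis, peeling off one dimension of $\cL$ at a time. Set $R_0 \defeq \cL$, $H^{(0)} \defeq H_\cL$, and $\gamma_0 \defeq \gamma$. Suppose after $s-1$ steps we have a box $R_{s-1} = [a_1{+}1:a_1{+}t_1]\times\cdots\times[a_{s-1}{+}1:a_{s-1}{+}t_{s-1}]\times[1:n_s]\times\cdots\times[1:n_D]$ together with the restricted hamiltonian $H^{(s-1)} \defeq H_{R_{s-1}}$ (still frustration-free, since restrictions of frustration-free hamiltonians are frustration-free) and its gap $\gamma_{s-1}$. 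View $H^{(s-1)}$ as a nearest-neighbour chain \emph{along the $s$-th axis}: the $i$-th ``super-qudit'' collects all vertices whose $s$-th coordinate equals $i$, and $H^{(s-1)}_i \defeq \sum_{\vi:\ \vi_s=i,\ \vi\text{-cell}\subseteq R_{s-1}} P_{\vi}$ acts nontrivially only on super-qudits $i,i+1$. This is an instance of the model of Section~\ref{sec:setupres}: the terms are already projectors, they split into at most $2^D$ mutually-commuting layers by colouring each cell $\vi$ with $\vi \bmod 2$ (same-colour cells are vertex-disjoint), and each term fails to commute with at most $g$ others. Applying Theorem~\ref{theo:knabe} to this chain furnishes an offset $a_s\in[0:n_s-t_s]$ with
\begin{equation}
\gamma_s \;\defeq\; \gamma\big(H^{(s-1)}\ \text{restricted to the slab}\ a_s{+}1\le x_s\le a_s{+}t_s\big)\;\le\;\frac{10^3 L^2 g^2}{t_s^2}+6\gamma_{s-1}.
\label{eq:recstep}
\end{equation}
Moreover that slab restriction of $H^{(s-1)}$ is literally $H_{R_s}$ for $R_s \defeq R_{s-1}\cap\{a_s{+}1\le x_s\le a_s{+}t_s\}$ (a cell $\vi$ lies in $R_s$ iff $\vi\subseteq R_{s-1}$ and $\vi_s\in[a_s{+}1:a_s{+}t_s{-}1]$), so $H^{(s)}\defeq H_{R_s}$ has gap exactly $\gamma_s$ and we may iterate.

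To justify the iteration we must check the hypotheses of Theorem~\ref{theo:knabe} at steps $s=1,\dots,D$. The chain along axis $s$ has length $n_s$, so $t_s<n_s/5$ is the stated hypothesis; and $8L^2<t_s$ follows from $t_s>2^6 4^D L$ together with $L\le 2^D$ (hence $8L^2\le 8\cdot 4^D\le 2^6 4^D L$). For the condition $\gamma_{s-1}\le g^2/4$ we track \eqref{eq:recstep}: since $t_s>2^6 4^D L$ one has $\tfrac{10^3 L^2 g^2}{t_s^2}<\tfrac{g^2}{16^D}$, so $\gamma_s\le 6\gamma_{s-1}+\tfrac{g^2}{16^D}$, and together with $\gamma_0=\gamma\le g^2/16^D$ this yields $\gamma_s\le \tfrac65\cdot\tfrac{6^s g^2}{16^D}$; for $s\le D-1$ this is at most $\tfrac15(3/8)^D g^2<g^2/4$. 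Thus Theorem~\ref{theo:knabe} is applicable at every step.

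Finally, unrolling the recurrence \eqref{eq:recstep} from $s=D$ down to $s=1$,
\begin{equation}
\gamma_D\;\le\;6^D\gamma+10^3 L^2 g^2\sum_{s=1}^{D}\frac{6^{D-s}}{t_s^2}\;\le\;6^D\gamma+\frac{10^3 L^2 g^2}{\min_q t_q^2}\cdot\frac{6^D-1}{5}\;\le\;6^D\gamma+\frac{200 L^2 g^2\, 6^D}{\min_q t_q^2}.
\label{eq:recfinal}
\end{equation}
Since $R_D$ is a hyper-rectangle of dimensions exactly $t_1\times\cdots\times t_D$ contained in $\cL$, we get $\gamma(t_1,\dots,t_D)=\min_R\gamma(R)\le\gamma(R_D)=\gamma_D$, which is the asserted bound. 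I expect the only genuine work to be the bookkeeping of the previous paragraph---verifying that the recursion never escapes the regime in which Theorem~\ref{theo:knabe} applies (in particular that $\gamma_s$ stays below $g^2/4$), which is precisely what the constants $2^6 4^D$, $16^D$ and $6^D$ in the hypothesis were tuned to guarantee. The structural claims---that each restricted chain is again an instance of the Section~\ref{sec:setupres} model with the same parameters $(L,g)$, that $L\le 2^D$, and that slab-restriction along axis $s$ commutes with the earlier peeling steps---are immediate.
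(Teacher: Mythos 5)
Your proposal is correct and follows essentially the same route as the paper: an inductive application of Theorem~\ref{theo:knabe}, peeling off one axis of the lattice at a time by viewing the restricted hamiltonian as a chain of super-qudits, tracking the condition $\gamma_{s-1}\leq g^2/4$ through the recursion, and unrolling $\gamma_s\leq 6\gamma_{s-1}+10^3L^2g^2/t_s^2$ to obtain the stated constants. The only cosmetic differences are that you fix a concrete nested sequence of boxes rather than taking the minimizing hyper-rectangle at each stage, and you supply the explicit $2^D$-colouring bound on $L$ (which is indeed what makes $8L^2<t_s$ follow from the hypothesis $2^64^DL<t_s$); neither changes the argument.
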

\begin{proof}
The proof will follow by inductive application of Theorem \ref{theo:knabe}. 
\begin{itemize}
\item {\bf Base case:} We view $H_{\cL}$ as a hamiltonian on a one dimensional chain of large qudits. This is achieved by combining the qudits $\{i\}\times [1:n_2]\ldots \times [1:n_D]$ into a single $i$th qudit of the chain. Defining $$H_i\defeq \sum_{{\vi} :P_{\vi}\in \text{supp}\br{\{i,i+1\}\times [1:n_2]\ldots \times [1:n_D]}}P_{\vi}$$ (c.f. Equation \ref{eq:Hidecompose}), we obtain the identity $H_{\cL}= \sum_{i=1}^{n_1-1} H_i$, which is the decomposition given in Equation \ref{eq:twoloc}. This allows us to conclude, from Theorem \ref{theo:knabe}, that 
\begin{equation}
\label{eq:gap1}
\gamma\br{t_1, n_2, \ldots n_D} \leq  \frac{10^3L^2g^2}{t_1^2} + 6\gamma.
\end{equation}
Since $$\gamma \leq \frac{1}{16^{D-1}}\cdot\frac{g^2}{16}, \quad \text{and}\quad \frac{10^\frac{3}{2}Lg}{t_1}= \frac{1}{4^D}\cdot\frac{g}{2}\cdot\frac{2\cdot 10^{\frac{3}{2}}L4^{D}}{t_1} \leq \frac{1}{4^{D}}\cdot\frac{g}{2},$$
Equation \ref{eq:gap1} additionally implies that 
\begin{equation}
\label{eq:gapcondition1}
\gamma\br{t_1, n_2, \ldots n_D} \leq  \frac{1}{16^D}\cdot\frac{g^2}{4}+\frac{6}{16^{D-1}}\cdot\frac{g^2}{16}  \leq \frac{g^2}{16^{D-1}} < \frac{g^2}{4},
\end{equation}
 maintaining the condition on spectral gap in Theorem \ref{theo:knabe}. 

\item {\bf Recursion:} Fix an $s\in [2:D]$. Assume
\begin{equation}
\label{eq:gapcondition2}
\gamma\br{t_1, t_2, \ldots  t_{s-1}, n_{s} \ldots n_D} \leq \frac{g^2}{16^{D-s+1}} < \frac{g^2}{4},
\end{equation}
which is true for $s=2$ via Equation \ref{eq:gapcondition1} and for $s>2$ via Equation \ref{eq:gapcondition2} in previous recursion. Let $$R=[a_1+1:a_1+t_1]\times \ldots [a_{s-1}+1:a_{s-1}+t_{s-1}]\times [1:n_s] \times \ldots [1:n_D]$$ be a hyper-rectangle that achieves the minimum in the definition of $$\gamma\br{t_1, t_2, \ldots  t_{s-1}, n_{s} \ldots n_D}.$$ Defining 
$$H'_i\defeq \sum_{{\vi} :P_{\vi}\in \text{supp}\br{[a_1+1:a_1+t_1]\times\ldots [a_{s-1}+1:a_{s-1}+t_{s-1}]\times\{i,i+1\}\times [1:n_{s+1}]\ldots \times [1:n_D]}}P_{\vi},$$ we have the decomposition
$$H_R = \sum_{i=1}^{n_s-1} H'_i,$$
which is the same as given in Equation \ref{eq:twoloc}. Since the values of $g,L$ remain unchanged for $H_R$,
we can apply Theorem \ref{theo:knabe} (c.f. Equation \ref{eq:gapcondition2}) and obtain the relation
\begin{equation}
\label{eq:gaprecurse} 
\gamma\br{t_1, t_2, \ldots  t_s, n_{s+1} \ldots n_D}\leq \frac{10^3L^2g^2}{t_s^2} + 6\gamma(R) = \frac{10^3L^2g^2}{t_s^2} + 6\gamma\br{t_1, t_2, \ldots  t_{s-1}, n_{s} \ldots n_D}.
\end{equation}
 Using Equations \ref{eq:gapcondition2} and \ref{eq:gaprecurse} we further have 
$$\gamma\br{t_1, t_2, \ldots  t_s, n_{s+1} \ldots n_D} \leq  \frac{1}{16^D}\cdot \frac{g^2}{4}+\frac{8g^2}{16^{D-s+1}} \leq \frac{g^2}{16^{D-s}}.$$
This ensures that Equation \ref{eq:gapcondition2} continues to be satisfied as we update $s\rightarrow s+1$.
\end{itemize}

Having obtained Equation \ref{eq:gaprecurse} for all $s\in [2:D]$ and Equation \ref{eq:gap1}, we combine them to arrive at the upper bound
$$\gamma(t_1, t_2, \ldots t_D) \leq 6^D\gamma+10^3L^2g^2\cdot\br{\sum_{q=1}^D \frac{6^{D-q}}{t_q^2}}\leq 6^D\gamma + 10^3L^2g^2\frac{6^{D}}{5}\cdot\frac{1}{\min_qt_q^2}.$$ This concludes the proof. 
\end{proof}

The dependence on $\min_q t_q^2$ cannot be improved; although the dependence on $D$ might not be optimal. To show this, we provide the following example adapted from \cite{GossetM16}. We consider the heisenberg ferromagnet, which is a one dimensional chain of qubits with frustration-free local hamiltonian defined by nearest-neighbour interaction $\frac{1}{2}\br{\ket{01}-\ket{10}}\br{\bra{01}-\bra{10}}$. The spectral gap of an open chain of length $n_1$ is $\frac{\pi^2}{2n_1^2}$.   We take $n_2\times n_3\times \ldots n_D$ independent copies of this system and arrange them on a $D$ dimensional lattice, with the chains running in the `first' dimension. That is, for each $i_2, \ldots i_D\in [1:n_2]\times \ldots [1:n_D]$, the set of qubits $\{(i, i_2, i_3,\ldots i_D)\}_{i=1}^{n_1}$ interact via the nearest-neighbour heisenberg ferromagnetic interaction. Consider all hyper-rectangles of dimension $t\times n_2 \times \ldots n_D$. Any such hyper-rectangle contains $n_2\times n_3\times \ldots n_D$ independent copies of the heisenberg ferromagnetic chain of length $t$, and hence the local spectral gap in this hyper-rectangle is the minimum local spectral gap of each copy, which is $\frac{\pi^2}{2t^2}$. Equivalently, $\gamma(t, n_2, \ldots n_D) = \frac{\pi^2}{2t^2}$. On the other hand, in the limit $n_1, n_2, \ldots n_D\rightarrow \infty$, we have $\gamma\rightarrow 0$. Since $t$ is the smallest of $\{t, n_2, \ldots n_D\}$,  this saturates the bound in Theorem \ref{theo:rectknabe} (up to the factors that depend on $D$).

The definition of $\gamma(t_1, t_2, \ldots t_D)$ takes a minimum over all hyper-rectangles of dimension $t_1\times t_2\times \ldots t_D$. To see that this is cannot be improved to an average of the spectral gap over all hyper-rectangles, consider the following hamiltonian for $D=1$: $$H= \sum_{i=1}^{k-1} P_{i,i+1} + \sum_{i=k+1}^nP'_i,$$ where $P_i=\frac{1}{2}\br{\ket{01}-\ket{10}}\br{\bra{01}-\bra{10}}$ and $P'_i= \ketbra{1}$. This is the same heisenberg ferromagnet on the first $k$ qubits and a trivial hamiltonian on the rest. For this hamiltonian, $\gamma = \bigo{\frac{1}{2k^2}}$. But the spectral gap, averaged over all hamiltonians on line segments of length $k$, is at least $1-\frac{k}{n}$. This is much larger than $\gamma + \frac{1}{k^2}$.

\section{Proof of Lemma \ref{convdetect}}
\label{append:convdetect}
\begin{proof}
Define two projectors
$$\Pi_1\defeq \prod_{i,j\in T_1}(\id-P_{ij}), \Pi_2\defeq \prod_{i,j\in T_2}(\id-P_{ij}).$$ 
Since $P_{ij}$ mutually commute for all $i,j\in T_{\alpha}$, we have
\begin{eqnarray*}
\Pi_1 \succeq \id - \sum_{i,j\in T_1} P_{ij}, \quad \Pi_2 \succeq \id - \sum_{i,j\in T_2} P_{ij}.
\end{eqnarray*}
Adding both sides, we find
\begin{equation}
\label{commutingsucc}
\Pi_1+\Pi_2 \succeq 2\id - \br{\sum_{i,j\in T_1} P_{ij} + \sum_{i,j\in T_2} P_{ij}} = 2\id - H. 
\end{equation}
Next, we apply Jordan's lemma \cite{jordan1875}, which states that $\Pi_1$ and $\Pi_2$ can be simultaneously block diagonalized in the following sense. There exist orthogonal projectors $\bar{\Pi}_{\beta}$ of dimension at most $2$, such that
$$\Pi_{\alpha}= \sum_{\beta} \bar{\Pi}_{\beta}\Pi_{\alpha}\bar{\Pi}_{\beta}, \quad \forall \alpha\in\{0,1\}.$$
Moreover, $\ketbra{v_{\alpha,\beta}}\defeq\bar{\Pi}_{\beta}\Pi_{\alpha}\bar{\Pi}_{\beta}$ is either a one dimensional normalized vector or a null vector. As a consequence, we have the identities
\begin{equation}
\label{eq:jordanid}
\Pi_2\Pi_1\Pi_2 = \sum_{\beta}|\braket{v_{1,\beta}}{v_{2,\beta}}|^2\ketbra{v_{2,\beta}}, \quad \Pi_1+\Pi_2=\sum_{\beta}\br{\ketbra{v_{1,\beta}}+\ketbra{v_{2,\beta}}}.
\end{equation}
We will show the following claim.
\begin{claim}
\label{clm:2dimop}
Let $0\leq \nu \leq \frac{3-\sqrt{5}}{2}$. It holds that 
$$\ketbra{v_{1,\beta}}+\ketbra{v_{2,\beta}} \preceq \nu|\braket{v_{1,\beta}}{v_{2,\beta}}|^2\ketbra{v_{2,\beta}} + (2-\nu)\bar{\Pi}_{\beta}.$$
\end{claim}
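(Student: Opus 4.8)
The plan is to reduce the claimed operator inequality to an elementary fact about $2\times 2$ matrices, by working inside a single Jordan block. Fix $\beta$ and restrict attention to the range of $\bar{\Pi}_{\beta}$: by Jordan's lemma this space has dimension at most $2$, and outside it all four operators appearing in the claim vanish, so it suffices to verify the inequality there. The degenerate cases are trivial: if the block is one-dimensional then $\ket{v_{1,\beta}}$ and $\ket{v_{2,\beta}}$ are both (a phase times) the unit vector spanning $\bar{\Pi}_{\beta}$, so $|\braket{v_{1,\beta}}{v_{2,\beta}}|^2=1$ and both sides equal $2\bar{\Pi}_{\beta}$; and if one of the $v$'s is the null vector the inequality is immediate since $2-\nu\geq 1$. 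So assume $\bar{\Pi}_{\beta}$ is exactly two-dimensional with $\ket{v_{1,\beta}},\ket{v_{2,\beta}}$ genuine unit vectors.

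In that block I will pick an orthonormal basis $\{\ket{0},\ket{1}\}$ with $\ket{0}=\ket{v_{2,\beta}}$ and, choosing the phases of the basis vectors appropriately, write $\ket{v_{1,\beta}}=\sqrt{c}\,\ket{0}+\sqrt{1-c}\,\ket{1}$, where $c\defeq|\braket{v_{1,\beta}}{v_{2,\beta}}|^2\in[0,1]$. Setting
$$A\defeq (2-\nu)\,\id+\nu c\,\ketbra{v_{2,\beta}}-\ketbra{v_{1,\beta}}-\ketbra{v_{2,\beta}},$$
the claim is exactly $A\succeq 0$. A direct computation in this basis gives
$$A=\begin{pmatrix} (1-\nu)(1-c) & -\sqrt{c(1-c)} \\ -\sqrt{c(1-c)} & 1-\nu+c \end{pmatrix},$$
the key simplification being that the $(0,0)$-entry $2-\nu+\nu c-c-1$ factors as $(1-\nu)(1-c)$.

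Finally I invoke the standard criterion that a $2\times 2$ Hermitian matrix is positive semidefinite iff its diagonal entries and its determinant are nonnegative. The diagonal entries are nonnegative since $\nu<1$ and $c\in[0,1]$. For the determinant,
$$\det A=(1-\nu)(1-c)(1-\nu+c)-c(1-c)=(1-c)\big[(1-\nu)^2-\nu c\big],$$
using $(1-\nu)(1-\nu+c)-c=(1-\nu)^2-\nu c$. Since $0\leq c\leq 1$, the bracket is at least $(1-\nu)^2-\nu=\nu^2-3\nu+1$, which is nonnegative precisely when $\nu\leq\frac{3-\sqrt{5}}{2}$ (the smaller root of $\nu^2-3\nu+1$). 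Hence $\det A\geq 0$ under the hypothesis, so $A\succeq 0$, proving the claim. I expect no real obstacle here: the computation is mechanical, and the only points requiring attention are spotting the factorization of the $(0,0)$-entry and noting that the binding case is $c=1$, which is exactly what produces the threshold $\frac{3-\sqrt{5}}{2}$.
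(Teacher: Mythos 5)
Your proof is correct and follows essentially the same route as the paper: reduce to a $2\times 2$ matrix inequality in the Jordan block with $\ket{0}=\ket{v_{2,\beta}}$, and verify positive semidefiniteness via the determinant, which factors as $(1-c)\bigl[(1-\nu)^2-\nu c\bigr]$ and yields the threshold $\nu\leq\frac{3-\sqrt{5}}{2}$ at $c=1$. The only (harmless) differences are cosmetic: you use the diagonal-entries-plus-determinant criterion where the paper uses trace-plus-determinant, and you spell out the degenerate one-dimensional and null-vector cases that the paper leaves implicit.
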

Before proving the claim, let us show how it implies the lemma. Setting $\nu=\frac{1}{3} < \frac{3-\sqrt{5}}{2}$ and substituting Claim \ref{clm:2dimop} in Equation \ref{eq:jordanid}, we find that
$$\Pi_1+\Pi_2 \preceq \frac{1}{3}\Pi_2\Pi_1\Pi_2 + (2-\frac{1}{3})\id = \frac{1}{3}DL^{\dagger}(H)DL(H)+ \frac{5}{3}\id.$$
Using this in Equation \ref{commutingsucc}, we obtain
$$2\id-H \preceq \frac{1}{3}DL^{\dagger}(H)DL(H)+ \frac{5}{3}\id \implies \frac{1}{3}\id-H \preceq \frac{1}{3}DL^{\dagger}(H)DL(H).$$
This proves the lemma after multiplying both sides by $\ket{\psi}$.
\begin{proof}[Proof of Claim \ref{clm:2dimop}]
Let $\ket{0}\defeq\ket{v_{2,\beta}}$ and $a\ket{0}+b\ket{1}=\ket{v_{1,\beta}}$, where $|a|^2+|b|^2=1$. The claimed inequality is equivalent, in matrix representation, to
$$\begin{pmatrix} 1+|a|^2 & ab^{*} \\ a^{*}b & |b|^2 \end{pmatrix} \preceq \nu|a|^2\begin{pmatrix} 1 & 0 \\ 0 & 0 \end{pmatrix} + (2-\nu)\begin{pmatrix} 1 & 0 \\ 0 & 1 \end{pmatrix}=\begin{pmatrix} 2-\nu|b|^2 & 0 \\ 0 & 2-\nu \end{pmatrix}.$$
This can be re-written as
$$0\preceq \begin{pmatrix} 1-|a|^2-\nu|b|^2 & -ab^{*} \\ -a^{*}b & 2-\nu-|b|^2 \end{pmatrix}=\begin{pmatrix} (1-\nu)|b|^2 & -ab^{*} \\ -a^{*}b & 1+|a|^2-\nu \end{pmatrix}.$$
Since the trace of the matrix on right hand side is positive for $\nu<1$, above inequality is satisfied if the determinant is non-negative. The determinant can be computed to be
$$(1+|a|^2-\nu)|b|^2(1-\nu)- |a|^2|b|^2=|b|^2\br{(1-\nu)^2+|a|^2(1-\nu)-|a|^2}=|b|^2\br{(1-\nu)^2-\nu|a|^2},$$
which is non-negative for all $\nu$ satisfying $(1-\nu)^2-\nu\geq 0$. This is satisfied if $\nu\leq  \frac{3-\sqrt{5}}{2}.$ This completes the proof.
\end{proof}
\end{proof}

\section{Proof of Lemma \ref{lightconelem}}
\label{append:lightcone}

\begin{proof}
The lower bound follows from Lemma \ref{convdetect}. The upper bound uses the following claim, adapted from \cite{GossetH15}. 
\begin{claim}[See Claim B.1, \cite{AAG19}]
\label{clm:chebysevDL}
Let $F$ be any polynomial of degree at most $\lceil\frac{t}{8L}\rceil$ such that $F(1)=1$. It holds that 
\begin{equation}
DL(t)= \br{\prod_{k} \br{\id - Q_{S_k}}}F\br{DL(H)^{\dagger}DL(H)}\br{\prod_{k} \br{\id - Q_{T_k}}}.
\end{equation} 
\end{claim}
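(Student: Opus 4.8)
The plan is to exploit the structure of $DL(t)$ as a product of two "commuting blocks" of projectors and to reduce the statement to a one-variable polynomial identity on the joint spectrum of the pair of projectors $A \defeq \prod_k (\id - Q_{S_k})$ and $B \defeq \prod_k (\id - Q_{T_k})$. Note first that, since the sets $S_k$ are pairwise disjoint (indeed separated by the $r_k$ sandwiched qudits), the projectors $Q_{S_k}$ act on disjoint sets of qudits and hence mutually commute; therefore $A = \prod_k(\id - Q_{S_k})$ is itself an orthogonal projector. By the same reasoning, since the $T_k$ are pairwise disjoint, $B = \prod_k(\id - Q_{T_k})$ is also an orthogonal projector. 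Thus $DL(t) = AB$ and $DL(t)^\dagger DL(t) = B A B$, while $DL(H)^\dagger DL(H)$ is, in the relevant reduced picture, conjugate to $BAB$ as well — more precisely, the key point to establish is that $A\, g(BAB) = g(ABA)\, A$ and $g(BAB)\, B = B\, g(ABA)$ hold for any polynomial $g$ with $g(0)$ appropriately controlled, so that sandwiching a polynomial of $DL(H)^\dagger DL(H)$ between $A$ on the left and $B$ on the right collapses to a polynomial in the single operator $AB$.

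Concretely, I would proceed in the following steps. First, record the elementary operator identities $A \cdot (BAB)^\ell = (ABA)^\ell \cdot A$ and $(BAB)^\ell \cdot B = B \cdot (ABA)^\ell$ for every integer $\ell \geq 1$, proved by a trivial induction (each move of an $A$ past the alternating word uses only associativity and $A^2 = A$, $B^2 = B$). Second, observe that $DL(H)$ restricted to the coarse-grained description factors, up to the light-cone/locality bookkeeping, into exactly $A$ and $B$: this is the content already used in \cite{AAV16}, namely that grouping the layers $T_\alpha$ of the fine hamiltonian into the $\lceil t/(8L)\rceil$-fold coarse structure makes $DL(H)$ (raised to the relevant power, or after the standard rearrangement) a product of the block projectors; here I would cite the light-cone argument of Appendix \ref{append:lightcone} and Claim B.1 of \cite{AAG19} directly, since the claim is explicitly stated as "adapted from \cite{GossetH15}". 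Third, use Jordan's lemma on the pair $(A,B)$ exactly as in the proof of Lemma \ref{convdetect}: decompose the Hilbert space into $\bar\Pi_\beta$-blocks of dimension at most $2$, on each of which $A$ and $B$ are rank-$\leq 1$ projectors, so that $BAB$ and $AB$ are simultaneously diagonal/triangular $2\times 2$ matrices. On each block the identity "$A F(\,\cdot\,) B = (\text{polynomial in } AB)$" becomes the scalar identity $F(|\langle v_{1,\beta}|v_{2,\beta}\rangle|^2)\cdot(\text{rank-one factor}) = F(1)\cdot(\text{same rank-one factor})$ combined with the projector $F(1)=1$; checking the degree-$\leq \lceil t/(8L)\rceil$ constraint is then automatic because the only powers that survive are those already present in $DL(t)$ itself. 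Assembling the blocks via $\sum_\beta \bar\Pi_\beta = \id$ gives the claimed identity.

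The main obstacle I anticipate is the bookkeeping in the second step: making precise the sense in which $F(DL(H)^\dagger DL(H))$, sandwiched between $\prod_k(\id - Q_{S_k})$ and $\prod_k(\id - Q_{T_k})$, equals $F$ applied to $DL(t)^\dagger DL(t)$. This requires knowing that the fine detectability operator $DL(H)$, when its layers are regrouped at scale $8L$ and one restricts attention to how it acts relative to the coarse regions $S_k$ and $T_k$, is "locally" just the product of the two coarse block-projectors — i.e. that the terms of $DL(H)$ straddling a region $S_k$ but not touching $T_k$ can be absorbed into $\id - Q_{S_k}$, and symmetrically. This is exactly the light-cone / Lieb-Robinson-type containment argument that Appendix \ref{append:lightcone} is devoted to, so in the body of this proof I would isolate it as the one nontrivial input and defer its verification there, treating the remaining Jordan-lemma and polynomial-identity steps as the routine part.
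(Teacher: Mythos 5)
There is a genuine gap, and it sits exactly where you placed your ``main obstacle'': the absorption step is not a side input to be deferred --- it is the entire content of the claim, and the machinery you build around it (Jordan's lemma on the pair $A\defeq\prod_k(\id-Q_{S_k})$, $B\defeq\prod_k(\id-Q_{T_k})$, and the identities $A(BAB)^\ell=(ABA)^\ell A$) is both unnecessary and, as used, incorrect. The paper's proof is short and direct: write $DL(H)=DL_1\cdots DL_L$ in terms of the layer operators, observe that $\br{DL(H)^{\dagger}DL(H)}^{q}$ is a product of $q(2L-2)+1$ layer operators, which for $q\leq\lceil t/(8L)\rceil$ is fewer than $\lfloor t/4\rfloor$; since each factor $\id-P_{ij}$ supported inside a region $R$ satisfies $(\id-Q_R)(\id-P_{ij})=(\id-Q_R)$, and the overlap between adjacent $S$ and $T$ regions is at least $\lfloor t/4\rfloor$ (Equation \ref{eq:overlaplb}), every layer operator in the word is absorbed into either $A$ or $B$, giving $A\br{DL(H)^{\dagger}DL(H)}^{q}B=AB$ for every such $q$. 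The claim then follows purely from linearity and $F(1)=1$ (the coefficients of $F$ sum to $1$), with no spectral decomposition of $(A,B)$ needed at all.

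Your proposed route would actually prove something false if carried out. You assert that $DL(H)^{\dagger}DL(H)$ is ``in the relevant reduced picture, conjugate to $BAB$''; there is no such conjugacy --- $DL(H)$ is built from the microscopic projectors $P_{ij}$ and $A,B$ from the ground-space projectors of the coarse regions, and the only relation between them is the locality/absorption statement above. If one did replace $F(DL(H)^{\dagger}DL(H))$ by $F(BAB)$ and ran Jordan's lemma, each two-dimensional block would contribute $F(|\braket{v_{1,\beta}}{v_{2,\beta}}|^2)$ times a rank-one factor, which is \emph{not} $F(1)$ times that factor unless the overlap is trivial; so $AF(BAB)B\neq AB$ in general, and your ``scalar identity'' step does not hold. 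The degree bound $\lceil t/(8L)\rceil$ also plays no role in your argument as written, whereas in the correct proof it is precisely what guarantees the word of layer operators is short enough to be absorbed into the overlap regions.
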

Before outlining the proof of this claim, note that we can set $F=\step_{\frac{t}{8L}, \frac{\gamma}{g^2+\gamma}}$ to obtain   
\begin{eqnarray*}
&&\max_{\psi\in G_{\perp}}\|DL(t)\ket{\psi}\|^2\\
&&= \max_{\psi\in G_{\perp}}\|\br{\prod_{k} \br{\id - Q_{S_k}}}\step_{\frac{t}{8L}, \frac{\gamma}{g^2+\gamma}}\br{DL(H)^{\dagger}DL(H)}\br{\prod_{k} \br{\id - Q_{T_k}}}\ket{\psi}\|^2\\
&&\leq \max_{\psi\in G_{\perp}}\|\step_{\frac{t}{8L}, \frac{\gamma}{g^2+\gamma}}\br{DL(H)^{\dagger}DL(H)}\ket{\psi}\|^2.
\end{eqnarray*}
In the last inequality, we used the following :
$$\br{\prod_{k} \br{\id - Q_{T_k}}}\ket{\psi}\in G_{\perp}, \quad \|\br{\prod_{k} \br{\id - Q_{T_k}}}\ket{\psi}\|\leq 1.$$ From Lemma \ref{detectlem}, the second largest eigenvalue of $DL(H)^{\dagger}DL(H)$ is at most $\frac{1}{1+\frac{\gamma}{g^2}}= 1- \frac{\gamma}{g^2+\gamma}$. This concludes the proof of Lemma \ref{lightconelem}.

\begin{proof}[Proof outline of Claim \ref{clm:chebysevDL}]
Following \cite{AAG19}[Claim B.1], we consider the `layer operators'
$$DL_{\alpha}\defeq \prod_{i,j: P_{ij}\in T_{\alpha}}\br{\id-P_{ij}}.$$
Observe that $DL(H) = DL_1DL_2\ldots DL_L$ and hence $$DL(H)^{\dagger}DL(H) = DL_L\ldots DL_2DL_1DL_2\ldots DL_L.$$ This implies that the operator $$\br{DL(H)^{\dagger}DL(H)}^{q}= \br{DL_L\ldots DL_2DL_1DL_2\ldots DL_{L-1}}^{q-1}DL_L\ldots DL_2DL_1DL_2\ldots DL_L$$ is a product of $(2L-2)\cdot(q-1)+2L-1 = q\cdot(2L-2) +1 $ operators $DL_{\alpha}$. Suppose we have $$q\leq \lceil\frac{t}{8L}\rceil \implies q\cdot(2L-2) +1 < \lfloor t/4\rfloor \quad (\text{using } t\geq 8L^2).$$ Since the overlap between an $S$ set and the adjacent $T$ set is at least $\lfloor t/4\rfloor$ (Equation \ref{eq:overlaplb}), all the operators can be `absorbed' in either $\br{\prod_{k} \br{\id - Q_{S_k}}}$ or $\br{\prod_{k}\br{\id - Q_{T_k}}}$. This ensures that 
$$\br{\prod_{k} \br{\id - Q_{S_k}}}\br{DL(H)^{\dagger}DL(H)}^{q}\br{\prod_{k} \br{\id - Q_{T_k}}}= \br{\prod_{k}\br{\id - Q_{S_k}}}\br{\prod_{k} \br{\id - Q_{T_k}}}.$$  This proves the claim if we take the linear combination of above equation according to the polynomial $F$. 
\end{proof}
\end{proof}

\end{document}